\documentclass[12pt]{article}
\usepackage{eurosym}
\usepackage{natbib}
\usepackage{setspace}
\usepackage{amsmath,mathtools}
\usepackage{xcolor}
\usepackage{graphicx}
\usepackage{float}
\usepackage{subcaption}
\usepackage{hyperref}
\usepackage[top=2cm,bottom=4cm,left=3cm,right=3cm,asymmetric]{geometry}
\usepackage{amssymb}
\usepackage[labelfont=bf]{caption}

\newtheorem{proposition}{Proposition}

\newtheorem{lemma}{Lemma}
\newtheorem{corollary}{Corollary}

\let\emptyset\varnothing
\hypersetup{colorlinks = true, linkcolor=blue,urlcolor = blue,citecolor=blue}

\begin{document}

\setcounter{secnumdepth}{2}
\title{Audit the Auditors: Commitment versus Professional Judgment\thanks{The work described in this paper was partially supported by a grant from the
		Research Grants Council (RGC) of the Hong Kong Special Administrative Region,
		China [Project No. CityU 11500924]. We are grateful to the RGC panel and the three anonymous reviewers invited by the panel for their constructive comments and suggestions.}}
\author{Pingyang Gao\thanks{
HKU Business School, University of Hong Kong. Email: pgao@hku.hk}, Jinzhi Lu\thanks{Department of Accountancy, City University of Hong Kong. Email: jinzhilu@cityu.edu.hk}, and Zhenpeng Yang\thanks{Department of Accountancy, City University of Hong Kong. Email: zhenpyang2-c@my.cityu.edu.hk}}
\maketitle
\setlength{\baselineskip}{24pt} \setlength{\parindent}{15pt}	
\begin{abstract}
This paper provides a theoretical framework to evaluate the trade-off between the self-regulated peer review system and independent government inspection (PCAOB) in the auditing profession. We model the peer review system as a Judgment Regime, where a stakeholder utilizes professional expertise, captured as a private signal, to make ex-post decisions on verifying audit failures. In contrast, PCAOB inspection is modeled as a Commitment Regime, where the stakeholder lacks private information but can commit ex-ante to a predetermined level of verification. We find that the Judgment Regime benefits from a resource-allocation effect and a deterrence effect driven by informed verification, whereas the Commitment Regime deters audit failures through the first-mover advantage of ex-ante commitment. Our analysis demonstrates that the stakeholder prefers the peer review system if and only if the private signal is sufficiently informative. Furthermore, comparative statics reveal that higher verification costs or stronger audit incentives shift the stakeholder's preference toward PCAOB inspection.
\end{abstract}
Keywords: Audit regulation; peer review; PCAOB inspection; professional judgment; commitment.\\
JEL Classification: M42, M48, D82
\newpage
\section{Introduction}
\label{Section: Intro}

Auditors play a crucial role in ensuring the credibility and reliability of accounting information. As a result, the question of who audits the auditors is important. Before the Sarbanes-Oxley Act of 2002 (SOX), the auditing profession relied heavily on self-regulated peer review. SOX subsequently established the Public Company Accounting Oversight Board (PCAOB) and shifted the oversight of audits of public-company issuers toward independent public inspection. This regulatory shift remains a subject of ongoing debate, and prior empirical research has yielded mixed findings regarding the relative effectiveness of the two regimes \citep{lohlein2016peer}.

Despite the extensive empirical literature on peer review versus PCAOB inspection, there is a lack of theoretical work analyzing this regulatory shift. While the two oversight regimes differ along multiple dimensions in practice, our paper focuses on a fundamental yet underexplored trade-off: the tension between professional expertise and institutional commitment. Peer reviewers are typically practicing auditors who possess deep expertise in auditing standards and procedures, enabling them to form informed judgments about whether a particular engagement is deficient \citep[e.g.,][]{wallace1994exploratory, ehlen1996procedural}. PCAOB inspection, by contrast, operates through a highly institutionalized process with pre-specified procedures and predetermined resource allocations established before the inspected firm's conduct is observed \citep[e.g.,][]{defond2010should, lennox2010auditing, houston2013audit, lohlein2016peer}. This institutional structure enables PCAOB inspectors to credibly commit to a verification level before auditors make their effort choices. These institutional features suggest a trade-off: professional expertise may permit more informed ex-post judgment, whereas institutionalized procedures may support more credible ex-ante verification commitments.

To evaluate the trade-off between commitment and professional judgment, we start with a standard model in which an auditor exerts effort to attest to a firm's accounting report. The auditor may issue a clean opinion on a misstated accounting report, an event we define as an audit failure. The auditor's effort reduces the likelihood of such failure. However, the effort level is not directly observable. Thus, absent an effective oversight mechanism, the auditor has weak incentives to exert effort. We introduce a representative stakeholder who bears the loss associated with an undetected audit failure and chooses an oversight mechanism to identify such failures. The stakeholder can be interpreted broadly as representing the interests protected by audit oversight.

We compare two benchmark regimes that isolate the roles of professional judgment and ex-ante commitment. In the Judgment Regime (Regime J), which maps into the peer review system, the stakeholder's ability to identify the audit failure is aided by professional knowledge. In particular, the stakeholder receives a private signal regarding whether there is an audit failure. This signal assists the stakeholder in deciding how much resource to spend to formally identify the audit failure, if any. In the Commitment Regime (Regime C), which maps to PCAOB inspection, there is no private information available to the stakeholder. Instead, the stakeholder can commit a predetermined amount of resources towards identifying the audit failure. Although peer review and PCAOB inspection each involve both professional judgment and institutional commitment in practice, we model them as polar benchmark regimes to isolate the fundamental trade-off between informed ex-post judgment and credible ex-ante commitment.

We find that the stakeholder's private information in Regime J plays two roles. First, it improves resource allocation: the stakeholder's private signal allows him to spend more (less) resources in identifying the audit failure when he knows that the failure is more (less) likely (resource allocation effect). Second, it also plays a deterrence role: the auditor exerts more effort when the private signal becomes more informative. In contrast, the stakeholder's commitment in Regime C only has a deterrence effect: the auditor exerts more effort due to the stakeholder's first-mover advantage, a mechanism similar to that in the sequential oligopoly market \citep{von1934marktform}.

The stakeholder's preference between the two regimes is determined by the trade-off among the three forces described above. We find that when the private signal in Regime J is uninformative, the stakeholder prefers Regime C due to the deterrence effect of commitment (i.e., first-mover advantage). When the private signal in Regime J is perfectly informative, the stakeholder prefers Regime J, because the resource-allocation and deterrence benefits of private information outweigh the deterrence effect of commitment. In addition, we find that the stakeholder's expected payoff in Regime J monotonically increases in the informativeness of the private signal. Thus, the stakeholder prefers Regime J over Regime C if and only if the private signal is sufficiently informative.  

Our comparative statics analysis reveals how changes in verification and audit incentives shift the trade-off between the two regimes. First, as the cost of verification increases (or its benefit decreases), the Judgment Regime becomes less appealing to the stakeholder. Higher costs undermine the value of ex-post private information because the stakeholder becomes reluctant to verify suspected audit failures. This reluctance weakens both the resource allocation and deterrence effects of the private signal. While the commitment in Regime C also loses value as deterrence weakens, this loss is smaller in magnitude than the overall loss suffered by the Judgment Regime.

Second, an increase in the audit incentive, meaning it is easier to motivate the auditor, makes the Commitment Regime relatively more attractive. Although both regimes benefit from higher audit incentives, the deterrence power of private information in the Judgment Regime is partially offset. This occurs because the stakeholder anticipates the increased audit effort and consequently reduces his own verification level. In contrast, the Commitment Regime allows the stakeholder to fully capitalize on the auditor's increased responsiveness, making ex-ante commitment a more powerful and valuable tool.

\subsection{Related Literature}
\label{SubSection: Contributions}
Our paper contributes to the literature on peer review and PCAOB inspection. Existing empirical research suggests that both peer review and PCAOB inspection can enhance audit quality, although the evidence regarding their relative effectiveness and how market participants recognize their effects remains mixed \citep{lohlein2016peer}. Recent studies further document that PCAOB inspections can improve reporting credibility and accountability through mechanisms such as stronger enforcement, greater public oversight, and partner identification requirements \citep{defond2010should, burke2019audit, gipper2020public}. Our paper complements this empirical literature by providing a theoretical framework for comparing peer review and PCAOB inspection through the trade-off between professional judgment and ex-ante commitment. We discuss the implications in detail in Section \ref{Section: Statics}.

Our paper contributes to the analytical literature on audit regulation \citep[e.g.,][]{dye1993auditing, ye2013economics, chen2019effects}. Specifically, we answer the call by \citet{ye2023theory} for further research on the role of active enforcers. In our model, PCAOB inspectors do more than merely exert effort to uncover audit failures; they also incentivize auditors through ex-ante commitments. Furthermore, our work complements existing research on the broader consequences of the Sarbanes-Oxley Act (SOX) \citep[e.g.,][]{patterson2007effects, deng2012auditors} and the effects of regulatory tightening. For instance, while \citet{gao2019auditing} find that stricter auditing standards can restrict auditors' professional judgment and discourage expertise acquisition, our study takes a different approach by focusing on the professional judgment of peer reviewers. Our paper stands apart by examining a distinct aspect of SOX: the fundamental regulatory shift from a peer-review system to independent PCAOB inspection.

The rest of the paper proceeds as follows. Section \ref{Section: Model} describes the model. Section \ref{Section: Equilibrium} solves the equilibrium decisions in the two regimes. Section \ref{Section: MainAnalysis} shows our main results. Section \ref{Section: AuditorandRisk} evaluates the two regimes in terms of the audit risk. Section \ref{Section: Statics} presents the comparative statics analysis and empirical implications. Section \ref{Section: Conclusions} concludes.

\section{Model Set-Up}
\label{Section: Model}
We augment a standard audit model with the stakeholder's verification of audit reports. A firm has access to a project that requires an initial investment of $I$. The project's underlying state is $\theta$, which is either good (G) or bad (B) with equal probability. The project delivers a cash flow of $X>0$ in the good state and $0$ in the bad state. We assume $\Pr(\theta=G)X>I$ such that the firm's default action is to invest in the absence of additional information. The firm does not have private information about $\theta$ and always sends the auditor a favorable attestation report. The model consists of two main players, an auditor and a representative stakeholder. The auditor first conducts an audit of the firm's initial favorable report. Then, the stakeholder decides the level of effort to verify the correctness of the audit opinions. Finally, the firm makes the investment decision based on the audit and verification results. We now discuss the model ingredients in more detail.

The auditor chooses an audit effort $\beta$ at a cost $K(\beta)=\frac{1}{2}k\beta^2$ to obtain a signal about the true state $\theta$, and then issues it truthfully as an audit report $r\in\{g,b\}$. $r=g$ represents an unqualified (clean) opinion, whereas $r=b$ is a qualified opinion that rejects the firm's favorable assessment. The effectiveness of this audit technology depends on the auditor's effort $\beta$. Specifically, the report $r$ satisfies
\begin{align}
\label{AuditTech}
     \Pr(r=g \mid \theta=G,\beta)&=1,\\
    \Pr(r=g \mid \theta=B,\beta)&=1-\beta.
\end{align} 
We define \textit{audit failure} as the event where the auditor issues a clean audit opinion in the bad state. 

After receiving the auditor's report, the stakeholder can verify the true state. Upon receiving a negative report from the auditor $(r=b)$, there is no need for the stakeholder to conduct further verification, as the qualified opinion itself confirms the bad state and the firm does not invest. Upon receiving a clean opinion ($r=g$), the stakeholder can identify the state $\theta$ with a chosen probability $a\in[0,1)$. Conducting verification incurs a cost of $C(a) = \frac{1}{2}ca^2$ for the stakeholder, where $c > 0$. If verification under a clean opinion reveals a bad state ($\theta=B$), the auditor will suffer a loss $L>0$ and the firm will not invest in the project. If verification confirms a good state ($\theta=G$) or fails to detect a bad state ($\theta=B$), the firm will invest in the project. In these cases, the auditor's payoff is normalized to 0. We assume the stakeholder represents the interests of investors, and his payoff is $X-I$ if the project is carried out in the good state, $-I$ if the project is carried out in the bad state, and $0$ if the project is not carried out. We assume $c>I>0$ and $k>L>0$ to ensure interior levels of audit effort and verification. We use $l\equiv L/k$ to denote the auditor's audit incentive and $i\equiv I/c$ to denote the stakeholder's verification incentive.

We interpret the stakeholder as a representative investor-protection principal. The stakeholder's objective is to prevent capital from being allocated to value-destroying projects while allowing value-creating projects to proceed. The firm is modeled as a passive decision maker that mechanically invests when the available audit and verification information supports investment. Thus, the stakeholder's payoff equals the net payoff to investors from the resulting investment decision: $X-I$ when a good project is undertaken, 
$-I$ when a bad project is undertaken, and zero when investment is prevented.

The stakeholder can choose between two mechanisms, modeled as two regimes, to identify audit failure. In practice, peer reviewers may also possess some commitment power, and PCAOB inspectors may also have access to private information. We focus on the two polar cases, i.e., pure judgment without commitment and pure commitment without private information, to cleanly isolate the fundamental trade-off between these two forces.

In the \textbf{Judgment Regime (Regime J)}, the stakeholder receives an additional private signal $s$ about the state. The stakeholder uses both the audit report $r$ and the private signal $s$ as his information set to form posterior beliefs. In our main analysis, we let $s$ be a general signal with finite support. To study how the informativeness of the private signal affects the equilibrium, we let the stakeholder's private signal belong to a continuous nested family of information structures $\{\pi_q\}_{q\in[0,1]}$. The parameter $q$ indexes informativeness such that, whenever $q'>q$, $\pi_{q'}$ strictly Blackwell dominates $\pi_q$. Moreover, $\pi_0$ is uninformative, $\pi_1$ is perfectly informative, and the conditional signal distributions vary continuously in $q$.

The timeline for Regime J is summarized as follows:
\begin{itemize}
    \item At date 1, the stakeholder takes no action.
    \item At date 2, the auditor exerts effort $\beta$ to issue an audit report $r$.
    \item At date 3, the stakeholder receives $s$ and chooses the verification $a$ based on $s$ and the auditor's report $r$. The true state is identified with probability $a$. The payoffs are realized.
\end{itemize}

In the \textbf{Commitment Regime (Regime C)}, the stakeholder can make an ex-ante commitment to the level of resources used to verify the deficiency before the auditor exerts effort. We assume the stakeholder only conducts verification conditional on a clean audit report.

The timeline for Regime C is as follows:
\begin{itemize}
    \item At date 1, the stakeholder commits to $a$.
    \item At date 2, the auditor observes the committed verification level and exerts effort $\beta$.
    \item At date 3, the stakeholder conducts the committed verification $a$ when $r=g$. The payoffs are realized.
\end{itemize}

Our equilibrium concept is Perfect Bayesian Equilibrium (PBE). A PBE consists of the auditor's audit effort $\beta_R$ and the stakeholder's verification strategy $a_R(\Omega_R)$, where $R\in\{J,C\}$ denotes the Regime J or C, and $\Omega_R$ denotes the stakeholder's information set.\footnote{Specifically, the information set for Regime J consists of the private signal and the audit report, i.e., $\Omega_J=\{s,r\}$. In contrast, the information set in Regime C includes only the audit report, i.e., $\Omega_C=\{r\}$.} A PBE requires that (1) both the auditor and the stakeholder maximize their respective objective functions, given their beliefs and the strategies of others; and (2) the stakeholder uses Bayes' rule to update beliefs about the state $\theta$.

\section{Equilibrium}
\label{Section: Equilibrium}
\subsection{Judgment Regime}
\label{SubSection: EquiJ}
We first solve the stakeholder's response given the conjecture about the auditor's effort. We then determine the auditor's optimal audit effort and finally impose rational expectations to solve the equilibrium.

Suppose that the stakeholder conjectures the auditor's effort to be $\hat{\beta_J}$. The stakeholder's posterior belief conditional on any realization of $s$ (denoted by $\omega_t$) can be computed by Bayes' rule:
\begin{align*}
&\Pr(\theta=B|r=g,s=\omega_t)=\frac{\Pr(r=g,s=\omega_t|\theta=B)\Pr(\theta=B)}{\Pr(r=g,s=\omega_t)}\\
&=\frac{(1-\hat{\beta_J})\Pr(s=\omega_t|\theta=B)}{(1-\hat{\beta_J})\Pr(s=\omega_t|\theta=B)+\Pr(s=\omega_t|\theta=G)}.
\end{align*}
Thus, conditional on $r=g$ and $s=\omega_t$, the stakeholder's optimization problem at date 3 is 
\begin{align*}
\max_a\;\Pr(\theta=G|r=g,s=\omega_t)(X-I)+(1-a)\Pr(\theta=B|r=g,s=\omega_t)(-I)-\frac{1}{2}ca^2.
\end{align*}
The first-order condition suggests that
\begin{align}
a(\omega_t)=\frac{I}{c}\Pr(\theta=B|r=g,s=\omega_t). \label{aw}
\end{align}

Equation \eqref{aw} is the stakeholder's optimal verification level when receiving $(r=g,s=\omega_t)$. Considering all the realizations of $s$, the expected level of verification conditional on $\theta=B$ and $r=g$ is
\begin{align}
\bar{a}=\sum_t \Pr(s=\omega_t|\theta=B,r=g)a(\omega_t), \label{bara}
\end{align}
where $\Pr(s=\omega_t|\theta=B,r=g)=\frac{\Pr(s=\omega_t,r=g|\theta=B)}{\Pr(r=g|\theta=B)}=\Pr(s=\omega_t|\theta=B)$, the last equality following from conditional independence ($s\perp r\mid\theta$). For ease of exposition, $\bar{a}$ is defined as the \textit{critical verification level}.

The auditor's optimization problem is thus
\begin{align*}
	\max_{\beta}\; -\Pr(\theta=B)(1-\beta)\bar{a}L-\frac{1}{2}k\beta^2.
\end{align*}
The first-order condition suggests that
\begin{align}
	\beta=\frac{\Pr(\theta=B)L}{k}\bar{a}. \label{betaJ}
\end{align}

Plugging \eqref{bara} into \eqref{betaJ} and imposing the rational expectations requirement $\beta=\hat{\beta_J}$ leads to 
\begin{align}
	\beta-\frac{\Pr(\theta=B)L}{k}\sum_t \Pr(s=\omega_t|\theta=B)a(\omega_t)=0 \label{equilibriumJ},
\end{align}
which is an equation of $\beta$ that determines the equilibrium. Denoting the LHS of \eqref{equilibriumJ} as $F_J(\beta)$, we have:
\begin{align}
	F_J(\beta)=\beta-\frac{\Pr(\theta=B)LI}{kc}\sum_t \frac{(1-\beta)[\Pr(s=\omega_t|\theta=B)]^2}{(1-\beta)\Pr(s=\omega_t|\theta=B)+\Pr(s=\omega_t|\theta=G)}. \label{FJbeta}
\end{align}
We can easily observe that $F_J(0)<0$, $\lim_{\beta\to1^-}F_J(\beta)>0$, and $F'_J(\beta)>0$. Therefore, by the Intermediate Value Theorem and the monotonicity of $F_J(\beta)$, there must be a unique solution for the equation \eqref{equilibriumJ}, indicating the existence and the uniqueness of the equilibrium in Regime J. 

\begin{proposition} \label{Prop: equilibriumJ}
There exists a unique equilibrium in Regime J.  The stakeholder's verification level and the auditor's effort level are uniquely determined by equations \eqref{aw} and \eqref{equilibriumJ}. 
\end{proposition}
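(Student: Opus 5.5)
The plan is to reduce the equilibrium to a one-dimensional fixed-point problem in $\beta$ and show that this problem has exactly one solution. It proceeds in three steps.

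\textbf{Step 1: characterize best responses.} First I would argue that, for any conjectured effort $\hat\beta_J\in[0,1)$, the stakeholder's problem at each information set $(r=g,s=\omega_t)$ is strictly concave in $a$. The objective is linear in $a$ minus $\frac12 ca^2$. So \eqref{aw} is the unique best response. It lies in $[0,1)$ because $I<c$ and the posterior lies in $[0,1]$. After $r=b$ no verification is needed. Given the induced critical verification level $\bar a$ in \eqref{bara}, the auditor's objective is strictly concave in $\beta$. Its unique maximizer is \eqref{betaJ}, which lies in $[0,1)$ because $\Pr(\theta=B)\bar a<1$ and $L<k$.

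\textbf{Step 2: reduce to a scalar equation.} A PBE is then exactly a $\beta$ solving $\beta=\hat\beta_J$ in \eqref{betaJ}, that is, a root of $F_J$ in \eqref{FJbeta} on $[0,1)$. Together with the verification levels from \eqref{aw} evaluated at that root, this gives the full equilibrium.

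\textbf{Step 3: existence and uniqueness of the root.} I would establish three properties of $F_J$.
\begin{itemize}
\item It is continuous on $[0,1]$. The denominators $(1-\beta)\Pr(s=\omega_t|\theta=B)+\Pr(s=\omega_t|\theta=G)$ stay positive except for signals with $\Pr(s=\omega_t|\theta=G)=0$. For such signals the summand simplifies to $\Pr(s=\omega_t|\theta=B)$, a constant for $\beta<1$. I would handle these by canceling the factor $(1-\beta)$ before taking limits.
\item $F_J(0)<0$. The sum at $\beta=0$ equals $\sum_t \Pr(s=\omega_t|B)\Pr(\theta=B|s=\omega_t)$, which is strictly positive.
\item $\lim_{\beta\to1^-}F_J(\beta)\ge 1-\tfrac{\Pr(\theta=B)LI}{kc}\sum_t\Pr(s=\omega_t|B)>0$. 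Each summand is at most $\Pr(s=\omega_t|B)$, and $\frac{LI}{kc}<1$.
\end{itemize}
By the Intermediate Value Theorem, a root exists in $(0,1)$.

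For uniqueness I would show $F_J$ is strictly increasing. Each summand has the form $\frac{x p^2}{x p+g}$ with $x=1-\beta$, $p,g\ge 0$. This is nondecreasing in $x$, since its derivative is $\frac{p^2 g}{(xp+g)^2}\ge0$, and hence nonincreasing in $\beta$. Therefore $F_J'(\beta)\ge 1>0$, and the root is unique.

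The main obstacle is minor: the degenerate signal realizations where $\Pr(s=\omega_t|\theta=G)=0$, including the perfectly informative case $q=1$. There the posterior equals $1$ and the expression is not defined at $\beta=1$. I would handle these by simplification on $[0,1)$ and one-sided limits, as above. I would also check that the unique root indeed lies strictly below $1$, so that all probabilities remain well defined.
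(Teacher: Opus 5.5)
Your proposal is correct and follows essentially the same route as the paper. Both reduce the equilibrium to the root of $F_J$ in \eqref{FJbeta}, verify $F_J(0)<0$, $\lim_{\beta\to1^-}F_J(\beta)\ge 1-\frac{LI}{2kc}>0$ and $F_J'(\beta)\ge 1$ (from the nonnegative terms $p^2g/((1-\beta)p+g)^2$), and then apply the Intermediate Value Theorem. Your Step 1 (strict concavity and interiority of both best responses) and your explicit handling of realizations with $\Pr(s=\omega_t|\theta=G)=0$ are small extra checks that the paper leaves implicit.
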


We now analyze the comparative statics of the equilibrium outcomes under Regime J.  A key question is how the informativeness of the private signal, $q$, affects the equilibrium outcomes. We use $\beta_J(q)$ to denote the equilibrium audit effort level as a function of informativeness $q$, and $\bar{a}(\beta_J,q)$ to denote the critical verification level as a function of $\beta_J$ and $q$.
\begin{proposition} \label{Prop: beta}
\begin{enumerate}
	\item Keeping $\beta_J$ constant, the critical verification level increases in the informativeness of $s$: $\frac {\partial \bar{a}(\beta_J,q)}{\partial q}>0$.
	\item The auditor's effort increases in $q$: $\frac {d\beta_J(q)}{d q}>0$. 
    \item The critical verification level increases in $q$: $\frac {d\bar{a}(\beta_J(q),q)}{d q}>0$.
\end{enumerate}
\end{proposition}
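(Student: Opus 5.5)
The plan is to write the critical verification level as a convex, Blackwell-monotone functional of the signal. Then part 1 follows from Jensen's inequality under a garbling, part 2 from the monotone structure of $F_J$, and part 3 from the auditor's first-order condition \eqref{betaJ}.

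\textbf{Part 1.} For each realization $\omega_t$ write $p_B(t)=\Pr(s=\omega_t|\theta=B)$, $p_G(t)=\Pr(s=\omega_t|\theta=G)$ and $x_t=p_B(t)/p_G(t)$, with the usual convention for $p_G(t)=0$. Using \eqref{aw} and \eqref{bara},
\begin{align*}
\bar a(\beta,q)=\frac{I}{c}\sum_t p_G(t)\,h_\beta(x_t),\qquad h_\beta(x)=\frac{(1-\beta)x^2}{(1-\beta)x+1}=x-\frac{x}{(1-\beta)x+1}.
\end{align*}
Since $x\mapsto x/(1+(1-\beta)x)$ is strictly concave for $\beta<1$, $h_\beta$ is strictly convex on $[0,\infty)$. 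Its recession behaviour is linear, $h_\beta(x)\sim x$, so atoms with $p_G(t)=0$ contribute $p_B(t)$ consistently with the formula above. Thus $\bar a$ is an $f$-divergence-type functional $\sum_t p_G(t)h_\beta(p_B(t)/p_G(t))$.

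If $\pi_{q'}$ Blackwell dominates $\pi_q$, then $\pi_q$ is a garbling of $\pi_{q'}$ through some Markov kernel $M$. Each likelihood ratio under $\pi_q$ is then a $p_G$-weighted average of likelihood ratios under $\pi_{q'}$. Jensen's inequality therefore gives $\bar a(\beta,q')\ge\bar a(\beta,q)$ (the data-processing inequality). Equality under strict convexity would require the likelihood ratios merged by $M$ to coincide on every cell. In that case $\pi_q$ and $\pi_{q'}$ would be Blackwell equivalent, contradicting strict dominance. Hence $\bar a(\beta,\cdot)$ is strictly increasing in $q$, which is the content of $\partial\bar a/\partial q>0$ (read as a derivative wherever the continuous family is differentiable, and as strict monotonicity otherwise).

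\textbf{Part 2.} Write $F_J(\beta,q)=\beta-\frac{\Pr(\theta=B)L}{k}\bar a(\beta,q)$. By the argument before Proposition \ref{Prop: equilibriumJ}, $F_J$ is strictly increasing in $\beta$, and by part 1 it is strictly decreasing in $q$. Take $q'>q$. Then $F_J(\beta_J(q),q')<F_J(\beta_J(q),q)=0$, while $\lim_{\beta\to1^-}F_J(\beta,q')>0$. By continuity and monotonicity, the unique root satisfies $\beta_J(q')>\beta_J(q)$. Continuity of the family in $q$ gives continuity of $\beta_J$. Where derivatives exist, the implicit function theorem yields $d\beta_J/dq=-\partial_qF_J/\partial_\beta F_J>0$.

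\textbf{Part 3.} In equilibrium, \eqref{betaJ} gives $\bar a(\beta_J(q),q)=\frac{k}{\Pr(\theta=B)L}\beta_J(q)$, so part 3 follows immediately from part 2. Part 3 is needed separately because the total effect combines the direct effect of part 1 with an indirect effect through higher $\beta$: $\bar a$ is decreasing in $\beta$, so the indirect effect pushes the other way.

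The main obstacle is the strictness in part 1, namely translating strict Blackwell dominance into a strict inequality via the equality case of Jensen for the strictly convex $h_\beta$. The cleanest route is to state the garbling explicitly and note that equality forces the kernel to merge only signals with identical likelihood ratios. Such a garbling is invertible up to relabeling, which contradicts strict dominance.
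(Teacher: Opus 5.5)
Your proposal is correct and follows essentially the paper's route. Part 1 rests on Blackwell dominance plus strict convexity: you phrase it through likelihood ratios and the data-processing inequality for an explicit garbling, whereas the paper uses a mean-preserving spread of posteriors applied to $p\mapsto p^2/(1-\beta_J p)$; your equality-case-of-Jensen treatment of strictness is also more careful than the paper's bare assertion. Part 2 combines Part 1 with the monotonicity of $\bar a$ (equivalently $F_J$) in $\beta$, and Part 3 is read off the first-order condition \eqref{betaJ}, as in the paper.
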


Proposition \ref{Prop: beta} highlights the role of the stakeholder's private information. First, it allows the stakeholder to choose the verification level based on the realization of $s$. Part 1 shows that a more informative signal helps the stakeholder make more informed verification decisions. In particular, the stakeholder with a more informative signal can intensify verification efforts when the underlying state is bad, while conserving resources when the state is good. This resource allocation improvement results in a higher critical verification level given any fixed level of audit effort. 

Second, the stakeholder's private information has a deterrence power because it motivates higher auditor effort (Part 2). In Regime J, the informativeness of the private signal is common knowledge. The auditor expects that the stakeholder with a more informative signal will conduct more intensive verification when the true state is bad. If the auditor reduces audit effort, a clean audit report becomes more likely, yet a deficiency is also more likely to be detected. This, in turn, increases the probability that the auditor will be penalized. In anticipation of this outcome, the auditor is motivated to exert higher effort to avoid punishment.

Part 3 captures the overall effect of a more informative signal on verification. To formalize this effect, we examine the derivative of the critical verification level with respect to the signal's informativeness. 
\begin{align}\label{Lemma:beta decomp}
\frac{d\bar{a}(\beta_J(q),q)}{d q} = \underbrace{\frac{\partial\bar{a}(\beta_J(q),q)}{\partial q}}_{+\ \text{direct effect}} + \underbrace{\underbrace{\frac{\partial\bar{a}(\beta_J(q),q)}{\partial \beta_J(q)}}_{-} \times \underbrace{\frac{d\beta_J(q)}{dq}}_{+}}_{-\ \text{indirect effect}}
\end{align}
The direct effect is exactly Part 1. The indirect effect consists of two components. The first is negative since the stakeholder would verify less intensively when anticipating a higher audit effort. The second component is positive, following Part 2. While the direct one is positive and the indirect is negative, Part 3 of Proposition \ref{Prop: beta}  demonstrates the dominance of the direct effect in equilibrium.

The next corollary explores the effect of audit incentive and verification incentive on audit effort and critical verification level. 
\begin{corollary} \label{coroJ}
Use $l\equiv L/k$ to denote the auditor's audit incentive and $i\equiv I/c$ to denote the stakeholder's verification incentive. We have $\frac{\partial \beta_J}{\partial l}>0$, $\frac{\partial \bar{a}}{\partial i}>0$, $\frac{\partial \beta_J}{\partial i}>0$, and $\frac{\partial \bar{a}}{\partial l}<0$.
\end{corollary}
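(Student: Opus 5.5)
The approach is the implicit function theorem applied to the equilibrium condition \eqref{equilibriumJ}, combined with the verification rule \eqref{aw}. Write $p\equiv\Pr(\theta=B)=\tfrac12$, $b_t\equiv\Pr(s=\omega_t|\theta=B)$ and $g_t\equiv\Pr(s=\omega_t|\theta=G)$. Define
\begin{align*}
S(\beta)\equiv\sum_t \frac{(1-\beta)b_t^2}{(1-\beta)b_t+g_t}.
\end{align*}
Then the critical verification level is $\bar a(\beta,i)=i\,S(\beta)$, and the equilibrium effort solves $F_J(\beta;l,i)=\beta-p\,l\,i\,S(\beta)=0$.

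The first step is to record that $S(\beta)>0$ and $S'(\beta)<0$. Each term $\frac{(1-\beta)b_t^2}{(1-\beta)b_t+g_t}$ can be written as $\frac{b_t^2}{b_t+g_t/(1-\beta)}$. In this form it is decreasing in $\beta$ whenever $g_t>0$, and it is constant when $g_t=0$. At least one signal realization has $g_t>0$, so $S'<0$. This fact is the same one that gives $F_J'(\beta)=1-p\,l\,i\,S'(\beta)>0$ in the proof of Proposition \ref{Prop: equilibriumJ}.

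With $F_J'>0$ in hand, the implicit function theorem gives
\begin{align*}
\frac{\partial\beta_J}{\partial l}=\frac{p\,i\,S(\beta_J)}{F_J'(\beta_J)}>0,\qquad \frac{\partial\beta_J}{\partial i}=\frac{p\,l\,S(\beta_J)}{F_J'(\beta_J)}>0.
\end{align*}
This settles the two claims about effort.

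For the verification level, differentiate $\bar a=i\,S(\beta_J(l,i))$, which is where the real content lies. Since $\bar a$ depends on $l$ only through $\beta_J$, we get $\partial\bar a/\partial l=i\,S'(\beta_J)\,\partial\beta_J/\partial l<0$. The derivative with respect to $i$ is
\begin{align*}
\frac{\partial\bar a}{\partial i}=S(\beta_J)+i\,S'(\beta_J)\frac{\partial\beta_J}{\partial i}.
\end{align*}
Here the direct effect is positive and the indirect effect is negative, so a sign comparison is needed; this is the main obstacle. Substituting the expression for $\partial\beta_J/\partial i$ gives
\begin{align*}
\frac{\partial\bar a}{\partial i}=S\left[1+\frac{p\,l\,i\,S'}{1-p\,l\,i\,S'}\right]=\frac{S(\beta_J)}{1-p\,l\,i\,S'(\beta_J)}.
\end{align*}
This is positive because $S>0$ and $S'<0$, so the direct effect dominates.

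An alternative route to the same conclusion uses the equilibrium identity $\beta_J=p\,l\,\bar a$. Together with $\partial\beta_J/\partial i>0$ and $p\,l>0$, it immediately gives $\partial\bar a/\partial i>0$ and serves as a consistency check. The only care needed is to treat $S$ as depending on $(l,i)$ only through $\beta$, holding the signal structure fixed.
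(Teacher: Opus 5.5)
Your proof is essentially the paper's: implicit differentiation of $F_J(\beta)=0$. Your $-S'(\beta_J)$ is exactly the sum $\sum_t b_t^2g_t/[(1-\beta_J)b_t+g_t]^2$ in the paper's formulas, and your alternative route via $\beta_J=p\,l\,\bar a$ is how the paper obtains $\partial\bar a/\partial i$ and $\partial\bar a/\partial l$. The first three inequalities are fine, since they need only $S>0$ and $S'\le 0$.

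The step that fails is your justification of strict $S'<0$. Term-by-term differentiation gives
\begin{align*}
S'(\beta)=-\sum_t \frac{b_t^2\,g_t}{\left[(1-\beta)b_t+g_t\right]^2}.
\end{align*}
A term is therefore strictly decreasing only when $b_t>0$ \emph{and} $g_t>0$; a realization with $g_t>0$ but $b_t=0$ contributes the constant $0$. So $S'<0$ if and only if some realization has $b_tg_t>0$, i.e., if and only if $s$ is not fully revealing, not merely when some $g_t>0$.

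The model explicitly includes the perfectly informative $\pi_1$. There $S\equiv 1$ and $\bar a=i$ (cf. \eqref{baraq1}), so $\partial\bar a/\partial l=i\,S'(\beta_J)\,\partial\beta_J/\partial l=0$, and the fourth claim holds only with equality. For $q<1$, strict Blackwell dominance by $\pi_1$ rules out $\pi_q$ being fully revealing, so the strict inequality does hold there. State that restriction instead of the ``some $g_t>0$'' argument. The paper's proof has the same blind spot: its expression \eqref{baraonl} also vanishes when the sum is zero.
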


$\frac{\partial \beta_J}{\partial i}>0$ indicates the role of the stakeholder's verification in motivating the auditor to exert more audit effort. An increase in the verification incentive increases the level of critical verification faced by the auditor, which in turn increases the probability of being punished for issuing an incorrect clean report. Given this pressure, the auditor is induced to exert more audit effort to avoid an audit failure. 

The intuition for $\frac{\partial \bar{a}}{\partial l}<0$ is as follows. As the audit incentive $l$ increases, the stakeholder expects that the auditor will exert more audit effort. Such an expectation decreases the stakeholder's posterior belief for the bad state, thus causing the critical verification $\bar{a}$ to decrease. 

\subsection{Commitment Regime}
\label{SubSection: EquiC}
At date 2, the auditor solves the following optimization problem after observing the verification level $a$ committed by the stakeholder
\begin{align*}
    \max_{\beta} \quad -\Pr(\theta=B)(1-\beta)aL-\frac{1}{2}k\beta^2.
\end{align*}
The first-order condition implies 
\begin{equation}
    \label{expressionbetaC}
    \beta(a)=\frac{\Pr(\theta=B)L}{k}a.
\end{equation}

Equation \eqref{expressionbetaC} illustrates that a higher level of verification motivates the auditor to exert more effort. 

At date 1, the stakeholder chooses the verification level with the anticipation of \eqref{expressionbetaC}. Thus, the optimization problem is
\begin{align*}
    \max_{a} \quad &\Pr(\theta=G) (X-I)-\Pr(\theta=B)(1-\beta(a)) (1-a) I\\&-\left(\Pr(\theta=B)(1-\beta(a))+\Pr(\theta=G)\right)\frac{ca^2}{2}.
\end{align*}
The first-order condition w.r.t. $a$ suggests that 
\begin{align}
	3l a^{2} - (8 + 4il)a + 2i(l + 2) = 0, \label{equilibriumC}
\end{align}
where $i\equiv \frac{I}{c}$ and $l\equiv \frac{L}{k}$ represent the verification incentive and audit incentive, respectively. In the proof of Proposition \ref{Prop: equilibriumC}, we show that \eqref{equilibriumC} has a unique solution. We use $\beta_C$ and $a_C$ to denote the auditor's effort level and the stakeholder's verification level in equilibrium. We summarize the discussion in the following proposition. 
\begin{proposition} \label{Prop: equilibriumC}
There exists a unique equilibrium in Regime C. $\beta_C$ and $a_C$ are determined by equations \eqref{expressionbetaC} and \eqref{equilibriumC}. 
\end{proposition}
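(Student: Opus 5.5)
Since Regime C is a sequential game, I will solve it by backward induction. The auditor's best response \eqref{expressionbetaC} is already unique for each committed $a$, because the auditor's objective is strictly concave in $\beta$. With $\Pr(\theta=B)=\frac12$ it reads $\beta(a)=\frac{l}{2}a$. Since $l<1$ and $a<1$, this lies in $[0,1)$, so it is interior. The substantive work is therefore to show that the stakeholder's date-1 problem has a unique maximizer on $[0,1)$ and that this maximizer is characterized by \eqref{equilibriumC}.

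First, I substitute $\beta(a)=\frac{l}{2}a$ into the objective and divide by the positive constant $c$. Dropping the constant term, the stakeholder maximizes
\[
W(a)=-\tfrac12\left(1-\tfrac{l}{2}a\right)(1-a)i-\left(1-\tfrac{l}{4}a\right)\tfrac{a^2}{2},
\]
which is a cubic in $a$. Differentiating and clearing denominators should reproduce \eqref{equilibriumC}. Up to a positive factor, the derivative is
\[
W'(a)\propto -\big(3la^2-(8+4il)a+2i(l+2)\big)\equiv -G(a).
\]
I will verify this algebra once, checking the constant term and the coefficient on $a$.

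Second, I locate the roots of $G$ on $[0,1]$. We have $G(0)=2i(l+2)>0$. At the other endpoint, $G(1)=3l-8-4il+2il+4i=3l-8+2i(2-l)$. Using $i<1$ and $l<1$ gives $G(1)<3l-8+2(2-l)=l-4<0$. Since $G$ is a quadratic, it has exactly one root in $(0,1)$, denoted $a_C$. The other root is larger than $1$: when $l>0$ the parabola opens upward, $G(1)<0$, and $G\to\infty$, so the second root exceeds $1$. Hence $W'>0$ on $[0,a_C)$ and $W'<0$ on $(a_C,1)$. This makes $a_C$ the unique global maximizer on $[0,1)$, with no boundary solutions. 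Explicitly, $a_C$ is the smaller root:
\[
a_C=\frac{(8+4il)-\sqrt{(8+4il)^2-24il(l+2)}}{6l}.
\]
Finally, setting $\beta_C=\frac{l}{2}a_C\in(0,1)$ completes the construction. Uniqueness of the subgame perfect (hence PBE) outcome then follows from uniqueness at both stages.

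The main obstacle is the sign argument in the second step, specifically showing $G(1)<0$ using only the standing assumptions $c>I$ and $k>L$. The bound $G(1)<l-4$ handles this. The remaining step, confirming that the derivative reduces exactly to the stated quadratic, is routine algebra.
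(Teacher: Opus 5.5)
Your argument is correct up to one sign slip, and it takes essentially the paper's route. Both proofs evaluate the quadratic in \eqref{equilibriumC} at $a=0$ and $a=1$, obtaining $2i(l+2)>0$ and $3l-8+2i(2-l)<0$, and then apply the Intermediate Value Theorem. The paper gets uniqueness on $[0,1]$ from $F_C'(a)=6la-(8+4il)<0$; you get it from the root location of an upward-opening parabola. You also go slightly further by tying the quadratic to the stakeholder's derivative, which shows explicitly that $a_C$ is the global maximizer on $[0,1)$. The paper leaves that step implicit.

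The sign slip is in your display. Direct differentiation gives $W'(a)=\tfrac18\big(3la^2-(8+4il)a+2i(l+2)\big)=\tfrac18 G(a)$, not $-G(a)$. As a check, $W'(0)=\tfrac{i}{2}\big(1+\tfrac{l}{2}\big)>0$ while $G(0)>0$. If $W'\propto -G$ held as you wrote it, $a_C$ would be a minimizer. Your stated sign pattern ($W'>0$ on $[0,a_C)$, $W'<0$ on $(a_C,1)$) is the right one, but it follows only from the corrected sign.
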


The next corollary explores the effect of audit incentive and verification incentive on audit effort and verification level. 
\begin{corollary} \label{coroC}
Use $l\equiv L/k$ to denote the auditor's audit incentive and $i\equiv I/c$ to denote the stakeholder's verification incentive.  We have $\frac{\partial \beta_C}{\partial i}>0$, $\frac{\partial a_C}{\partial i}>0$, $\frac{\partial \beta_C}{\partial l}>0$, and $\frac{\partial a_C}{\partial l}>0$. 
\end{corollary}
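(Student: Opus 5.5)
The plan is to apply the implicit function theorem to the first-order condition \eqref{equilibriumC}, and then pass the signs to $\beta_C$ through \eqref{expressionbetaC}. With $\Pr(\theta=B)=\tfrac12$, \eqref{expressionbetaC} reads $\beta_C=\tfrac{l}{2}a_C$. Define
\[
G(a;i,l)\equiv 3la^2-(8+4il)a+2i(l+2).
\]
By Proposition \ref{Prop: equilibriumC}, $a_C$ is the unique root of $G$ in $[0,1)$. The standing assumptions $c>I>0$ and $k>L>0$ give $i,l\in(0,1)$, and I will use these bounds throughout.

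\textbf{Step 1: signs of the partial derivatives of $G$.} Since $a<1$ and $l<1$, we have $6la<8$, so
\[
G_a=6la-8-4il<0 .
\]
Next, $G_i=2(l+2)-4la$. This is positive because $4la<4l<2l+4$ when $l<1$. Hence
\[
\frac{da_C}{di}=-\frac{G_i}{G_a}>0 .
\]
Then $\beta_C=\tfrac{l}{2}a_C$ immediately gives $\partial\beta_C/\partial i>0$.

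\textbf{Step 2: the sign of $G_l$ (the main obstacle).} We have
\[
G_l=3a^2+2i(1-2a),
\]
which need not be positive for arbitrary $a$. My plan is to evaluate it at the root.
\begin{itemize}
\item Use $G=0$ to eliminate $i$:
\[
i=\frac{a(8-3la)}{2(l+2-2la)} .
\]
Both numerator and denominator are positive.
\item Substitute into $G_l$ and multiply by the positive factor $l+2-2la$. The $l$-terms cancel and I expect to obtain
\[
(l+2-2la)\,G_l=a(8-10a).
\]
So $G_l>0$ if and only if $a_C<4/5$.
\item To show $a_C<4/5$, use that $G(0)=2i(l+2)>0$ and that $G$ is decreasing on $[0,1)$ by Step 1. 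It then suffices to check $G(4/5)<0$. Compute
\[
G(4/5)=\tfrac{48}{25}l-\tfrac{32}{5}+4i-\tfrac{6}{5}il .
\]
This expression is increasing in $i$. At $i=1$ it equals $\tfrac{18}{25}l-\tfrac{12}{5}<0$ for $l<1$.
\end{itemize}
Hence $a_C<4/5$, so $G_l>0$ at the root, and
\[
\frac{da_C}{dl}=-\frac{G_l}{G_a}>0 .
\]

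\textbf{Step 3: conclusion for $\beta_C$ in $l$.} From $\beta_C=\tfrac{l}{2}a_C$,
\[
\frac{d\beta_C}{dl}=\frac{a_C}{2}+\frac{l}{2}\frac{da_C}{dl}>0 ,
\]
since $a_C>0$ and $da_C/dl>0$ by Step 2. This completes the four signs. The only delicate part is the bound $a_C<4/5$ in Step 2, so I would double-check that algebraic reduction carefully.
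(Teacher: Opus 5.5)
Your proof is correct and has the same skeleton as the paper's: implicit differentiation of \eqref{equilibriumC}, then $\beta_C=\tfrac{l}{2}a_C$. Your Steps 1 and 3 reproduce the paper's expressions for $\partial a_C/\partial i$ and $\partial \beta_C/\partial l$.

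The only real divergence is Step 2, and there your starting premise is mistaken. Under the standing assumption $i<1$, the quantity $G_l=3a^2-4ia+2i$ is positive for every $a$:
\[
3a^2-4ia+2i=3\left(a-\tfrac{2}{3}i\right)^2+i\left(2-\tfrac{4}{3}i\right)>0 .
\]
Equivalently, its discriminant in $a$ is $8i(2i-3)<0$. This completion of the square is exactly what the paper uses, so no information about where the root lies is needed.

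Your detour is nonetheless valid. Substituting $i=a(8-3la)/\bigl(2(l+2-2la)\bigr)$ does give $(l+2-2la)G_l=a(8-10a)$ at the root. Also $G(4/5)=\tfrac{48}{25}l-\tfrac{32}{5}+4i-\tfrac{6}{5}il<\tfrac{18}{25}l-\tfrac{12}{5}<0$, so $a_C<4/5$ follows.

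What your route buys is a by-product: the bound $a_C<4/5$, and an exact equilibrium identity showing that $\partial a_C/\partial l$ has the sign of $4/5-a_C$. For the corollary as stated, the paper's global sign argument is shorter and avoids locating the root.
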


Similar to Regime J, an increase in the verification incentive ($i$) raises the level of verification faced by the auditor, thereby increasing the probability of being punished for issuing an incorrect clean report. Given this pressure, the auditor is induced to exert more audit effort to avoid an audit failure. This explains $\frac{\partial a_C}{\partial i}>0$ and $\frac{\partial \beta_C}{\partial i}>0$.

More interestingly, an increase in the auditor's incentive ($l$) also drives up both verification and effort. Because the stakeholder moves first, he recognizes that a higher $l$ makes the auditor more sensitive to the threat of inspection. Unlike in Regime J where the stakeholder reduces the verification level, the stakeholder in Regime C leverages this sensitivity. He commits to a higher level of verification to maximize deterrence, extracting an even greater increase in auditor effort. This explains $\frac{\partial \beta_C}{\partial l}>0$ and $\frac{\partial a_C}{\partial l}>0$.

\section{Main Results}
\label{Section: MainAnalysis}
Having established the equilibrium, we compare the stakeholder's expected utility across the two regimes. As shown in the proof of Lemma \ref{Lemma: ExtremeSignals}, the stakeholder's expected payoff in Regime J can be written out as a function of auditor effort and the critical verification level, thus denoted as $U_{J}(\beta_J(q),\bar{a}(\beta_J(q),q))$. We use $U_{C}(\beta_C,a_C)$ to denote the stakeholder's expected payoff in Regime C. The next proposition presents our main result.

\begin{proposition} \label{Prop:trade-offStakeinq}
	There exists a threshold $q^{\dagger}\in(0,1)$ such that the stakeholder prefers Regime J over Regime C if and only if $q>q^{\dagger}$.
\end{proposition}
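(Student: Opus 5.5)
The argument is an intermediate-value-plus-monotonicity argument in $q$. Define $\Delta(q)\equiv U_J(\beta_J(q),\bar a(\beta_J(q),q))-U_C(\beta_C,a_C)$. Since $U_C$ does not depend on $q$, it suffices to establish three facts.

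\begin{itemize}
\item[(i)] $\Delta(0)<0$.
\item[(ii)] $\Delta(1)>0$.
\item[(iii)] $\Delta$ is continuous and strictly increasing on $[0,1]$.
\end{itemize}

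Then the intermediate value theorem gives a unique root $q^\dagger\in(0,1)$, and $\Delta(q)>0$ if and only if $q>q^\dagger$. Facts (i) and (ii) are what Lemma \ref{Lemma: ExtremeSignals} is meant to deliver. Fact (iii) is the monotonicity of the Judgment-regime payoff in $q$ announced in the introduction.

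\textbf{Step 1: a reduced form for $U_J$.} Take expectations of the date-3 objective over $(r,s)$, using \eqref{aw}. Conditional on $(r=g,s=\omega_t)$, the optimized value of the $a$-dependent part is $\frac{c}{2}a(\omega_t)^2$, because $a=\frac{I}{c}p$ gives $apI-\frac{c}{2}a^2=\frac{c}{2}a^2$. Hence
\begin{align*}
U_J=\tfrac12(X-I)-\tfrac12(1-\beta)I+\tfrac{c}{2}\sum_t \Pr(r=g,s=\omega_t)\,a(\omega_t)^2 .
\end{align*}
Next use $\Pr(r=g,s=\omega_t)\,a(\omega_t)=\frac{I}{c}\cdot\frac12(1-\beta)\Pr(s=\omega_t|B)$. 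The last term then becomes $\frac{I}{4}(1-\beta)\bar a$, so
\begin{align*}
U_J=\tfrac12(X-I)-\tfrac{I}{2}(1-\beta_J)\bigl(1-\tfrac12\bar a\bigr),
\end{align*}
with $\beta_J=\frac{l}{2}\bar a$ from \eqref{betaJ}. This function is strictly increasing in $\bar a$ along the equilibrium relation. Continuity in $q$ follows from the continuity of the signal family and the implicit function theorem applied to \eqref{equilibriumJ}, since $F_J'>0$. Strict monotonicity then follows from Part 3 of Proposition \ref{Prop: beta}, which gives $d\bar a/dq>0$. This settles (iii).

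\textbf{Step 2: the endpoints.} For $q=0$, the posterior does not depend on $s$, so Regime J is the simultaneous-move (Nash) version of the Regime C game. Regime C's commitment optimum is chosen over a set of pairs $(a,\beta(a))$ that includes the Nash pair, because at the Nash point $\beta=\beta(a)$ along the auditor's best response. Hence $U_C\ge U_J(0)$. The inequality is strict because at the Nash $a$ the derivative of the Regime C objective is strictly positive: the term $\partial\beta/\partial a=l/2>0$ multiplies a strictly positive marginal benefit $\frac12(1-a)I+\frac{c a^2}{4}>0$, while the other terms vanish by the stakeholder's Nash first-order condition. This gives (i).

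For $q=1$, a perfectly informative signal gives $a(\omega_B)=i$ and $a(\omega_G)=0$. Then $\bar a=i$, $\beta_J=\frac{l}{2}i$, and $U_J(1)=\frac12(X-I)-\frac I2(1-\frac{li}{2})(1-\frac i2)$. The Regime C payoff with $\beta=\frac{l}{2}a$ is
\begin{align*}
\tfrac12(X-I)-\tfrac I2\bigl(1-\tfrac{la}{2}\bigr)(1-a)-\tfrac{c a^2}{4}\bigl(2-\tfrac{la}{2}\bigr).
\end{align*}
I would show that for every $a\in[0,1)$ this is strictly below $U_J(1)$. One route is to note that, for each fixed $a$, Regime C pays the verification cost on good-state clean reports as well, and verifies bad states only with probability $a$. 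Regime J at $q=1$ spends nothing in the good state and verifies at the optimal intensity $i$ in the bad state. Comparing term by term, this is a polynomial inequality in $(a,i,l)$ with $i,l\in(0,1)$.

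\textbf{Main obstacle.} I expect the $q=1$ comparison to be the hardest step. The reason is that Regime C's deterrence term $\frac{la}{2}$ could in principle favor a larger $a>i$. I would first try bounding the Regime C payoff by relaxing it to the problem "choose $a$ and pay cost only in the bad state", and then show that even this relaxed value is at most $U_J(1)$. The inequality I expect to need is $\frac{ca^2}{4}(2-\frac{la}{2})\ge\frac{ca^2}{4}(1-\frac{la}{2})+\frac{ca^2}{4}$, combined with $l<1$.
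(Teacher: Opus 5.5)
Your reduction to (i)--(iii) and your reduced form for $U_J$ match the paper (Lemma \ref{Lemma: ExtremeSignals} and \eqref{SimplifiedUJ}), and you argue (iii) the same way. Your proof of (i) is correct and cleaner than the paper's. The paper solves for $a_C$ and $\bar a(\beta_J(0),0)$ in closed form. You instead note that $U_J(0)=V(\bar a(0))$ for the Regime C objective $V(a)=\frac{1}{2}(X-I)-\frac{I}{2}(1-\frac{la}{2})(1-a)-\frac{ca^2}{4}(2-\frac{la}{2})$. The stakeholder's own first-order condition kills the part of $V'$ that holds $\beta$ fixed, leaving $V'(\bar a(0))=\frac{l}{2}\bigl[\frac{(1-\bar a)I}{2}+\frac{c\bar a^2}{4}\bigr]>0$. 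The genuine gap is (ii): you never prove $U_J(1)>U_C$, and the relaxation you propose provably fails.

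Your relaxed objective is $W(a)=\frac{1}{2}(X-I)-(1-\frac{la}{2})\bigl[\frac{I}{2}(1-a)+\frac{ca^2}{4}\bigr]$.
\begin{itemize}
\item Since $ci^2=Ii$, you get $W(i)=U_J(1)$ exactly.
\item The $\beta$-fixed part of $W'$ is $\frac{1-\beta}{2}(I-ca)$, which vanishes at $a=i$. Hence $W'(i)=\frac{l}{2}\bigl[\frac{(1-i)I}{2}+\frac{ci^2}{4}\bigr]>0$, so $\sup_a W>U_J(1)$.
\end{itemize}
In words, dropping the good-state cost turns Regime C into ``perfect information plus commitment.'' That beats Regime J at $q=1$ by the very first-mover argument you used in (i), so your bound lands strictly on the wrong side. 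The identity $2-\frac{la}{2}=(1-\frac{la}{2})+1$ only restates what was dropped. The good-state cost $\frac{ca^2}{4}$ is exactly what must be shown to outweigh the deterrence gain from choosing $a>i$.

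The paper handles this by solving \eqref{equilibriumC} for $a_C$ and reducing $U_J(1)-U_C$ to the sign of $h-2m^3$, which it bounds via $8m\le 16+m^2$. Your pointwise idea can also be completed directly. Write
\[
\tfrac{4}{c}\bigl(U_J(1)-V(a)\bigr)=a^2+(a-i)^2-l\bigl[ia(1-a)+\tfrac{a^3}{2}-i^2+\tfrac{i^3}{2}\bigr].
\]
This is affine in $l$. At $l=0$ it equals $a^2+(a-i)^2>0$. At $l=1$, using $a^3\le a^2$ and $i^3\le i^2$, it is at least $\frac{3}{2}(a-i)^2+ia^2>0$. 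Hence $V(a)<U_J(1)$ for all $a\in[0,1]$ and $l\in(0,1)$, which is what (ii) requires.
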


To build the intuition for Proposition \ref{Prop:trade-offStakeinq}, the following lemma establishes the stakeholder's preference at the extremes of signal informativeness, as well as the monotonicity of his expected utility in Regime J with respect to $q$.

\begin{lemma}\label{Lemma: ExtremeSignals}
	\begin{enumerate}
		\item When $s$ is uninformative, the stakeholder prefers Regime C over Regime J. That is, $U_C(\beta_C,a_C)>U_{J}(\beta_J(q=0),\bar{a}(\beta_J(q=0),q=0))$.
		\item The stakeholder prefers Regime J with a more informative $s$. That is, $\frac{dU_J(\beta_J(q),\bar{a}(\beta_J(q),q))}{dq}>0$.
		\item When $s$ is perfectly informative, the stakeholder prefers Regime J over Regime C. That is, $U_C(\beta_C,a_C)<U_{J}(\beta_J(q=1),\bar{a}(\beta_J(q=1),q=1))$.
	\end{enumerate}
\end{lemma}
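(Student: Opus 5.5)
Throughout I use $\Pr(\theta=B)=\tfrac12$, $i=I/c<1$ and $l=L/k<1$.

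\textbf{Step 1: a reduced form for $U_J$.} I would first compute the stakeholder's ex-ante payoff in Regime J and show it depends only on $(\beta,\bar a)$. The expected loss from undetected failures is $\tfrac12(1-\beta)(1-\bar a)I$. For the verification cost, combine \eqref{aw} with Bayes' rule:
\[
\Pr(r=g,s=\omega_t)\,a(\omega_t)=i\,\Pr(r=g,s=\omega_t,\theta=B)=\tfrac{i}{2}(1-\beta)\Pr(s=\omega_t\mid\theta=B).
\]
Hence the expected cost is
\[
\sum_t \Pr(r=g,s=\omega_t)\,\tfrac{c}{2}a(\omega_t)^2=\tfrac{c}{2}\cdot\tfrac{i}{2}(1-\beta)\sum_t\Pr(s=\omega_t\mid\theta=B)\,a(\omega_t)=\tfrac{I}{4}(1-\beta)\bar a .
\]
Therefore
\[
U_J=\tfrac12(X-I)-\tfrac{I}{2}(1-\beta)\bigl(1-\tfrac{\bar a}{2}\bigr),
\]
which is strictly increasing in $\beta$ and in $\bar a$ because $\bar a<1$.

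\textbf{Part 2.} This follows at once from the reduced form. By Proposition \ref{Prop: beta}, parts 2 and 3, both $\beta_J(q)$ and $\bar a(\beta_J(q),q)$ strictly increase in $q$, so $dU_J/dq>0$ by the chain rule.

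\textbf{Part 1: a first-mover argument.} Let $V(a)$ denote the Regime C objective evaluated at $\beta(a)=\tfrac{l}{2}a$. When $q=0$ the signal carries no information, so the Regime J verification is a single number $a_0=\bar a$. By \eqref{betaJ} and \eqref{expressionbetaC}, the auditor's response to $a_0$ is the same in both regimes, namely $\beta(a_0)=\beta_J(0)$. Hence $V(a_0)=U_J(q=0)$, and $a_0$ is feasible in Regime C. In Regime J, $a_0$ maximizes the stakeholder's payoff holding $\beta$ fixed, so $\partial V/\partial a=0$ at $a_0$. Differentiating the C objective in $\beta$ gives
\[
\tfrac12(1-a)I+\tfrac{c}{4}a^2>0,
\]
and $\beta'(a)=l/2>0$. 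Therefore
\[
V'(a_0)=\frac{\partial V}{\partial\beta}\,\beta'(a_0)>0 .
\]
So $a_0$ is not optimal under commitment, and $U_C(\beta_C,a_C)\ge V(a_0+\epsilon)>V(a_0)=U_J(q=0)$ for small $\epsilon>0$.

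\textbf{Part 3: the main obstacle.} With $q=1$, the stakeholder verifies at level $i$ exactly in the bad state and at level $0$ in the good state. Thus $\bar a=i$ and $\beta_J=li/2$, which gives
\[
U_J=\tfrac12(X-I)-\tfrac I2\bigl(1-\tfrac{li}{2}\bigr)\bigl(1-\tfrac i2\bigr).
\]
I cannot use a mimicking argument here, because $a_C$ may exceed $i$ and so induce a higher $\beta$ than Regime J does. Instead I would show directly that, for every $a\in[0,1)$,
\[
D(a)\equiv i\bigl(1-\tfrac{la}{2}\bigr)(1-a)+\tfrac12\bigl(2-\tfrac{la}{2}\bigr)a^2-i\bigl(1-\tfrac{li}{2}\bigr)\bigl(1-\tfrac i2\bigr)>0 .
\]
Here $D(a)$ is the C loss minus the J loss, both scaled by $c/2$. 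Two checks suggest this holds:
\begin{itemize}
\item At $l=0$, $D(a)=(a-\tfrac i2)^2+\tfrac{i^2}{4}>0$.
\item At $a=i$, $D(i)=\tfrac{i^2}{2}>0$.
\end{itemize}
The plan is to collect the $l$-terms of $D$ and show they are dominated by the $l$-free part. The $l$-dependent part is
\[
\tfrac{l}{4}\bigl[\,2i^2\bigl(1-\tfrac i2\bigr)-2ia(1-a)-a^3\,\bigr].
\]
I would then either bound this below using $l<1$ and $a,i<1$ and compare with $(a-\tfrac i2)^2+\tfrac{i^2}{4}$, or minimize the cubic $D$ over $a$ via its first-order condition, which is exactly \eqref{equilibriumC}, and sign $D$ at the minimizer. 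I expect this algebraic inequality to be the hardest step. If a uniform bound fails, I would split into the cases $a\le i$ and $a>i$.
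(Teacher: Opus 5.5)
Your Step~1 and Part~2 are exactly the paper's argument: the same reduced form $U_J=\tfrac12(X-I)-\tfrac I2(1-\beta)(1-\bar a/2)$, combined with Parts~2--3 of Proposition~\ref{Prop: beta}.

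\textbf{Part 1.} Your argument is correct but takes a different route from the paper. The paper solves for $\bar a(\beta_J(q=0),q=0)$ and $a_C$ in closed form and proves $a_C>\bar a(q=0)$ by a page of radical algebra. It then uses the fact that the common objective increases on $(0,a_C]$. You instead note two things. First, $a_0$ is feasible under commitment. Second, the ex-post first-order condition kills the direct term, which leaves $V'(a_0)=\bigl(\tfrac12(1-a_0)I+\tfrac c4a_0^2\bigr)\tfrac l2>0$. This is the textbook value-of-commitment argument. It needs no closed forms and uses only that effort responds positively to $a$ and that the stakeholder gains from effort, so it formalizes the paper's first-mover intuition directly. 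Since $V'(a)=\tfrac c8F_C(a)$ and $F_C$ is strictly decreasing, your argument also yields the ordering $a_C>\bar a(q=0)$, which the paper reuses in Lemma~\ref{Lemma: compareofa}.

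\textbf{Part 3.} This part is left unproved: you reduce it to $D(a)>0$ on $[0,1)$ but check only $l=0$ and $a=i$. The inequality is true and closes along the lines you suggest. $D$ is affine in $l$, so $D(a;l)=(1-l)D(a;0)+l\,D(a;1)$. It therefore suffices to add the case $l=1$ to your $l=0$ computation. Using $a^3\le a^2$ and then minimizing the quadratic in $a$,
\[
D(a;1)=\Bigl(1+\tfrac i2\Bigr)a^2-\tfrac{a^3}{4}-\tfrac{3i}{2}a+i^2-\tfrac{i^3}{4}\;\ge\;\tfrac{3+2i}{4}a^2-\tfrac{3i}{2}a+i^2-\tfrac{i^3}{4}\;\ge\;\frac{i^2(3+5i-2i^2)}{4(3+2i)}>0.
\]
This is shorter than the paper's route, which plugs in the closed-form $a_C$ and shows $h-2m^3>0$ via $8m\le16+m^2$. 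It also proves more: perfect-information judgment beats every commitment level, not just the optimal one.

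Separately, your reason for discarding the mimicking argument is wrong. Since $F_C(i)=i(2l-il-4)<0$ and $F_C$ is decreasing, $a_C<i$, and hence $\beta_C<\beta_J(q=1)$. The C loss then satisfies
\[
\tfrac I2(1-\beta_C)(1-a_C)+\tfrac c4(2-\beta_C)a_C^2>(1-\beta_C)\Bigl[\tfrac I2(1-a_C)+\tfrac c4a_C^2\Bigr]\ge\tfrac I2(1-\beta_C)\bigl(1-\tfrac i2\bigr)>\tfrac I2\bigl(1-\beta_J(q=1)\bigr)\bigl(1-\tfrac i2\bigr).
\]
The last expression is exactly the J loss at $q=1$. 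This one-line proof also separates the three sources of J's advantage: no verification is wasted in the good state, verification is tuned to the bad state, and induced effort is higher.
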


Part 1 of Lemma \ref{Lemma: ExtremeSignals} highlights the important role of ex-ante commitment (hereafter referred to as the \textit{deterrence effect of commitment}). When verification occurs ex post, the stakeholder lacks an incentive to verify a clean report, based on the rational belief that a clean report is more likely to arise from a good state.  Anticipating this, the auditor exerts low audit effort. A credible ex-ante commitment incentivizes auditor effort and shifts the equilibrium from low verification and low effort to one characterized by high verification and high audit effort. We confirm this intuition in the proof of Lemma \ref{Lemma: ExtremeSignals} by showing that $a_C>\bar{a}(\beta_J(q=0),q=0)$ and $\beta_C>\beta_J(q=0)$. Furthermore, Part 1 demonstrates that the stakeholder is ultimately better off in this high-verification, high-effort equilibrium, even after accounting for the greater resources devoted to verification. Overall, the deterrence effect of commitment is captured by 
\begin{align} \label{Eqn:commit}
\Delta_1\coloneqq U_{J}(\beta_J(q=0),\bar{a}(\beta_J(q=0),q=0))-U_{C}(\beta_C,a_C)<0.
\end{align}

Part 2 of Lemma \ref{Lemma: ExtremeSignals} shows that the stakeholder prefers a more informative signal in Regime J. This is due to two effects. The first is the \textit{resource allocation effect}. Keeping the auditor's effort fixed, a more informative signal enables a better allocation of resources based on the signal (i.e., a more informed choice of $a$). This improved resource allocation, reflected as an increase in the critical verification level, improves the stakeholder's ex-ante expected utility, holding the level of audit effort fixed. The resource allocation effect is captured by 
\begin{align} \label{Eqn:resource}
\Delta_2\coloneqq U_{J}(\beta_J(q=0),\bar{a}(\beta_J(q=0),q))-U_{J}(\beta_J(q=0),\bar{a}(\beta_J(q=0),q=0))>0.
\end{align}

The second is the deterrence effect of private information. As shown in Parts 2 and 3 of Proposition \ref{Prop: beta}, a more informative signal induces greater audit effort, and the increase in audit effort affects the stakeholder's utility in two opposing ways. On the one hand, it reduces the likelihood of audit failure. On the other hand, it lowers the stakeholder's verification levels for all realizations of $s$. Intuitively, the lower levels of verification reflect the stakeholder's optimal response to higher auditor effort. Although the reduction in verification decreases utility because audit failures are less likely to be uncovered (indirect), this loss is outweighed by the gain from the reduced incidence of audit failures (direct). The deterrence effect of private information is captured by 
\begin{align} \label{Eqn:deter}
\Delta_3\coloneqq U_{J}(\beta_J(q),\bar{a}(\beta_J(q),q))-U_{J}(\beta_J(q=0),\bar{a}(\beta_J(q=0),q))>0.
\end{align}

Notice that the difference between $U_{J}$ and $U_{C}$ can be decomposed into the three terms above:
\begin{align*}
	&\,U_{J}(\beta_J(q),\bar{a}(\beta_J(q),q))-U_{C}(\beta_C,a_C):=\Delta_1+\Delta_2+\Delta_3\\
	=&\,U_{J}(\beta_J(q=0),\bar{a}(\beta_J(q=0),q=0))-U_{C}(\beta_C,a_C)\\
	+&\,U_{J}(\beta_J(q=0),\bar{a}(\beta_J(q=0),q))-U_{J}(\beta_J(q=0),\bar{a}(\beta_J(q=0),q=0))\\
	+&\,U_{J}(\beta_J(q),\bar{a}(\beta_J(q),q))-U_{J}(\beta_J(q=0),\bar{a}(\beta_J(q=0),q)).
\end{align*}

The sum \(\Delta_2 + \Delta_3\) captures the total effect of the stakeholder's private information. This aggregated effect is strictly positive for any informative signal \(s\) and increases in \(q\). The key question is whether the magnitude of \(\Delta_2 + \Delta_3\) can exceed that of \(\Delta_1\). Part 3 of Lemma \ref{Lemma: ExtremeSignals} addresses this question, demonstrating that when the signal is perfectly informative (\(q=1\)), we indeed have \(\Delta_2 + \Delta_3 > |\Delta_1|\). Therefore, the two extreme cases (Parts 1 and 3), combined with the monotonicity of \(U_J\) with respect to \(q\) (Part 2 of Lemma \ref{Lemma: ExtremeSignals}), fully explain the main trade-off established in Proposition \ref{Prop:trade-offStakeinq}.

\section{Comparison of Audit Risk}
\label{Section: AuditorandRisk}
We define audit risk as the probability of an undetected audit failure. The audit risk in the two regimes can be expressed as 
\begin{align}
		AuditRisk_J&=\Pr(\theta=B)(1-\beta_J)(1-\bar{a}),\label{riskJ}\\
		AuditRisk_C&=\Pr(\theta=B)(1-\beta_C)(1-a_C).\label{riskC}
\end{align}
where $\beta_J=\frac{\Pr(\theta=B)L}{k}\bar{a}$ and $\beta_C=\frac{\Pr(\theta=B)L}{k}a_C$, following equations \eqref{betaJ} and \eqref{expressionbetaC}. The next proposition compares $AuditRisk_J$ with $AuditRisk_C$. 

\begin{proposition} \label{Prop: mainAuditor}
There exists a threshold $q^\ddagger\in(0,1)$ such that the audit risk is higher in Regime J than in Regime C if and only if $q<q^\ddagger$. 
\end{proposition}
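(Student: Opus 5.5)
The plan is to reduce the comparison of audit risks to a comparison of verification levels, and then reuse the monotonicity of $\bar{a}$ in $q$ from Proposition \ref{Prop: beta}. Substituting $\beta_R=\Pr(\theta=B)l\,a_R$ into \eqref{riskJ} and \eqref{riskC} shows that both risks are the same function of the verification level:
\begin{align*}
g(a)\coloneqq\Pr(\theta=B)\bigl(1-\Pr(\theta=B)la\bigr)(1-a).
\end{align*}
Specifically, $AuditRisk_J=g(\bar{a}(\beta_J(q),q))$ and $AuditRisk_C=g(a_C)$. On $[0,1)$ both factors of $g$ are positive and strictly decreasing, because $\Pr(\theta=B)l<1$. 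Hence $g$ is strictly decreasing there. So $AuditRisk_J>AuditRisk_C$ holds if and only if $\bar{a}(\beta_J(q),q)<a_C$.

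By Part 3 of Proposition \ref{Prop: beta}, $q\mapsto\bar{a}(\beta_J(q),q)$ is strictly increasing. It is also continuous, because the signal distributions vary continuously in $q$ and $\beta_J(q)$ is the unique root of the continuous, strictly monotone $F_J$ in \eqref{FJbeta}. It therefore suffices to establish two endpoint inequalities:
\begin{align*}
\bar{a}(\beta_J(0),0)<a_C<\bar{a}(\beta_J(1),1).
\end{align*}
The intermediate value theorem then yields a unique $q^\ddagger\in(0,1)$ at which the two levels coincide, and strict monotonicity delivers the ``if and only if'' statement.

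For the endpoints, I will locate $a_C$ using the quadratic $f(a)\coloneqq 3la^2-(8+4il)a+2i(l+2)$ from \eqref{equilibriumC}. The proof of Proposition \ref{Prop: equilibriumC} shows that its relevant root is unique, and it is the smaller root, since $f(0)=2i(l+2)>0$ and the parabola opens upward.

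At $q=1$ the signal reveals the state. After $s=B$ the posterior of the bad state is $1$, so $a=i$ by \eqref{aw}, which gives $\bar{a}(\beta_J(1),1)=i$. Evaluating the quadratic gives
\begin{align*}
f(i)=i\bigl(l(2-i)-4\bigr)<0,
\end{align*}
since $i,l<1$. Combined with $f(0)>0$, this places $a_C\in(0,i)$, so $a_C<\bar{a}(\beta_J(1),1)$.

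At $q=0$ we have $\bar{a}=i(1-\beta)/(2-\beta)<i/2$. Evaluating the quadratic at $i/2$ gives
\begin{align*}
f(i/2)=li\bigl(2-\tfrac{5}{4}i\bigr)>0.
\end{align*}
Together with $f(i)<0$, the smaller root therefore satisfies $a_C>i/2>\bar{a}(\beta_J(0),0)$. This also matches the claim $a_C>\bar{a}(\beta_J(0),0)$ made in the proof of Lemma \ref{Lemma: ExtremeSignals}.

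The main obstacle is bookkeeping rather than new ideas. I need to make sure the normalization with $\Pr(\theta=B)=1/2$ is absorbed consistently into $l$ in both \eqref{betaJ} and \eqref{equilibriumC}, so that the two endpoint evaluations of $f$ use the same constants. I also need to confirm that $a_C$ is the root lying in $[0,1)$, so that sign changes of $f$ at $0$, $i/2$ and $i$ pin it down.
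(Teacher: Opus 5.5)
Your proof is correct, and its skeleton is the paper's. You reduce the audit-risk comparison to comparing $\bar{a}(\beta_J(q),q)$ with $a_C$; the paper does this by differentiating $AuditRisk_R$ in $a_R$, while you note that both factors of $g$ are positive and decreasing, which is the same point. You then combine the endpoint inequalities of Lemma \ref{Lemma: compareofa} with Part 3 of Proposition \ref{Prop: beta} and the intermediate value theorem.

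You depart from the paper in how you prove the endpoint inequalities. The paper computes closed forms for $\bar{a}(\beta_J(0),0)$, $a_C$ and $\bar{a}(\beta_J(1),1)=i$, then compares them by rationalizing radicals. This takes two heavy computations: Step 2 of the proof of Lemma \ref{Lemma: ExtremeSignals}, which must show $3n>2m$ and sign a long polynomial, and the proof of Lemma \ref{Lemma: compareofa}. You instead bracket $a_C$ directly from the first-order condition \eqref{equilibriumC}. Since $f(i/2)=li(2-\tfrac{5}{4}i)>0>f(i)=i(l(2-i)-4)$ and $f$ is strictly decreasing on $[0,1]$, you get $a_C\in(i/2,i)$. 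You then need only the bounds $\bar{a}(\beta_J(0),0)=i(1-\beta_J(0))/(2-\beta_J(0))<i/2$ and $\bar{a}(\beta_J(1),1)=i$.

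Your route is shorter and avoids radicals, and it yields $a_C>\bar{a}(\beta_J(0),0)$ in Lemma \ref{Lemma: ExtremeSignals} as a by-product. It also makes the economics transparent. Commitment pushes verification above $i/2$, the highest level an uninformed ex-post verifier would ever choose. It keeps verification below $i$, the level a fully informed verifier chooses once the state is known to be bad. The paper's closed forms are still useful, though: \eqref{Exp: aC} is needed for the utility comparison at $q=1$ and for computing $q^\ddagger$ and $q^\dagger$ in Proposition \ref{Prop: ComparativeStaticTradeoffStake}. You also justify continuity of $q\mapsto\bar{a}(\beta_J(q),q)$, which the paper leaves implicit when it invokes the intermediate value theorem.
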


Note that since $AuditRisk_J$ and $AuditRisk_C$ take the same form, their comparison reduces to a comparison between the critical verification level in Regime J, $\bar{a}$, and the verification level in Regime C, $a_C$. To illustrate the idea of Proposition \ref{Prop: mainAuditor}, the following lemma compares $\bar{a}$ and $a_C$ at the extremes of signal informativeness.

\begin{lemma} \label{Lemma: compareofa} 
The committed verification level in Regime~C is higher than the critical verification level in Regime~J when the signal~$s$ is uninformative, and lower than the latter when $s$ is perfectly informative. That is, $\bar{a}(\beta_J(q=0),q=0)<a_C<\bar{a}(\beta_J(q=1),q=1)$.
\end{lemma}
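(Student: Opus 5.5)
The plan is to compute the three objects $\bar{a}(\beta_J(0),0)$, $a_C$ and $\bar{a}(\beta_J(1),1)$ explicitly, or bound them, using $\Pr(\theta=B)=\tfrac12$. I will then locate $a_C$ by evaluating the quadratic in \eqref{equilibriumC} at two test points, $a=i/2$ and $a=i$.

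\textbf{Step 1: the two extreme Regime J levels.} When $s$ is perfectly informative ($q=1$), the realization $s=B$ gives posterior $1$, so \eqref{aw} gives $a=i$; the realization $s=G$ gives posterior $0$, so $a=0$. Conditional on $\theta=B$ the signal is $s=B$ with probability one, so \eqref{bara} gives $\bar{a}(\beta_J(1),1)=i$, whatever the value of $\beta_J(1)$.

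When $s$ is uninformative ($q=0$), \eqref{aw} gives
\[
\bar{a}(\beta_J(0),0)=i\,\frac{1-\beta_J(0)}{2-\beta_J(0)}.
\]
Since $\beta_J(0)\in[0,1)$ and $(1-\beta)/(2-\beta)\le \tfrac12$ on $[0,1)$, this yields $\bar{a}(\beta_J(0),0)\le i/2$. I will not need the exact fixed point from \eqref{equilibriumJ}.

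\textbf{Step 2: bracketing $a_C$.} Let $f(a)=3la^2-(8+4il)a+2i(l+2)$ denote the left-hand side of \eqref{equilibriumC}. The evaluations are:
\begin{align*}
f(0)&=2i(l+2)>0,\\
f(i/2)&=il\left(2-\tfrac{5i}{4}\right)>0,\\
f(i)&=i(2l-li-4)<0,\\
f(1)&=3l-8-2il+4i<0.
\end{align*}
All signs use $i,l\in(0,1)$, which follows from $c>I$ and $k>L$. Because $f$ is a convex quadratic when $l>0$, $f(i)<0$ means exactly one root lies in $(0,i)$ and the other root exceeds $1$.

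By Proposition~\ref{Prop: equilibriumC}, $a_C$ is the unique admissible root in $[0,1)$, so $a_C$ is the root in $(0,i)$. From $f(i/2)>0$, $f(i)<0$ and the single sign change on $(0,i)$, I conclude $a_C\in(i/2,i)$.

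\textbf{Step 3: combine.} Steps 1 and 2 together give
\[
\bar{a}(\beta_J(0),0)\le i/2<a_C<i=\bar{a}(\beta_J(1),1),
\]
which is the claim. Note that strictness comes from $a_C>i/2$, so the weak bound in Step 1 is enough.

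The main obstacle I expect is Step 2. There I must confirm that the root identified in the proof of Proposition~\ref{Prop: equilibriumC} is indeed the smaller root, and I must check the sign of $f(1)$ and $f(i/2)$ carefully under $i,l<1$. If the sign of $f(1)$ turned out to be delicate, I would instead argue directly from the stakeholder's first-order condition, noting that the objective is decreasing at $a=i$.
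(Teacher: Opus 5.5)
Your proposal is correct, and it takes a different route from the paper on the left inequality. All four evaluations of $f$ check out: $f(i/2)=il\left(2-\tfrac{5i}{4}\right)>0$ and $f(i)=i(2l-il-4)<0$. Since the derivative of the Regime~C objective equals $\tfrac{c}{8}f(a)$, the root you isolate in $(i/2,i)$ is indeed the maximizer $a_C$.

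The paper derives closed forms $\bar{a}(\beta_J(q=0),q=0)=\frac{4+il-\sqrt{16+i^2l^2}}{2l}$ and $a_C=\frac{4+2il-m}{3l}$. It then proves $a_C>\bar{a}(\beta_J(q=0),q=0)$ in Step~2 of the proof of Lemma~\ref{Lemma: ExtremeSignals}. That step needs a double rationalization in $n=\sqrt{16+i^2l^2}$ and $m$, the auxiliary inequality $3n>2m$, and a sign check of a multi-term polynomial.

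You bypass the fixed point entirely. You use the bound $(1-\beta)/(2-\beta)\le\tfrac12$ and the test point $i/2$, which gives the sharper chain $\bar{a}(\beta_J(0),0)\le i/2<a_C<i$. The separator has a clean interpretation: $i/2$ is the ex post verification the stakeholder would choose facing zero audit effort and no signal. Positive equilibrium effort pushes the uninformative Regime~J level below $i/2$, while commitment pushes $a_C$ above it.

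For the right inequality the two arguments coincide in substance. The paper's rationalized numerator $-6il(2-l)-3i^2l^2$ is exactly $3l\,f(i)$.

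What the paper's route buys is the explicit formulas, which it reuses for the utility comparison in Lemma~\ref{Lemma: ExtremeSignals} and for $q^\ddagger$ in Proposition~\ref{Prop: ComparativeStaticTradeoffStake}. What yours buys is a short proof that needs no closed forms.
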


The intuition for Lemma \ref{Lemma: compareofa} follows from previous results. As shown in the proof of Lemma \ref{Lemma: ExtremeSignals}, the stakeholder chooses a higher level of critical verification in Regime C than in Regime J when the signal $s$ is uninformative, as the stakeholder can better incentivize auditor effort through committing to a verification level. As the signal $s$ becomes more informative, the critical verification level in Regime J increases, as shown in Part 3 of Proposition \ref{Prop: beta}. Lemma \ref{Lemma: compareofa} further establishes that when $s$ is perfectly informative, the critical verification level in Regime J exceeds that in Regime C.

\section{Comparative Statics and Empirical Implications}
\label{Section: Statics}
In this section, we explore how audit incentive ($l\equiv L/k$) and verification incentive ($i\equiv I/c$) affect the comparison of the two regimes.  To ensure tractability, we adopt a linear signal structure in Regime J. Under the linear information structure, the stakeholder in Regime J is either perfectly informed about the underlying state with probability $q$ or receives nothing with probability $1-q$. In the proof of Proposition \ref{Prop: ComparativeStaticTradeoffStake}, we establish that a higher $q$ implies higher informativeness in the Blackwell sense. Thus, all our main results apply to this signal structure. 

The next proposition shows our main comparative statics analysis. 
\begin{proposition}\label{Prop: ComparativeStaticTradeoffStake}
Both $q^\dagger$ and $q^{\ddagger}$ decrease in the verification incentive and increase in the audit incentive. That is, $\frac{\partial q^\dagger}{\partial i}<0$, $\frac{\partial q^\dagger}{\partial l}>0$, $\frac{\partial q^\ddagger}{\partial i}<0$, and $\frac{\partial q^\ddagger}{\partial l}>0$.
\end{proposition}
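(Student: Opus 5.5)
Throughout I would work with the linear signal structure and $\Pr(\theta=B)=\tfrac12$. The first step is to make both regimes explicit. Given $r=g$:
\begin{itemize}
\item an informed stakeholder who observes $B$ has posterior $1$ and sets $a=i$ by \eqref{aw};
\item an informed stakeholder who observes $G$ sets $a=0$;
\item an uninformed stakeholder has posterior $m(\beta)\equiv\frac{1-\beta}{2-\beta}$ and sets $a=i\,m(\beta)$.
\end{itemize}
Hence $\bar a=i\,[q+(1-q)m(\beta)]$, and \eqref{betaJ} gives $\beta_J=\tfrac{l}{2}\bar a$. In Regime C, $a_C(i,l)$ is the unique root of \eqref{equilibriumC}, with $\beta_C=\tfrac l2 a_C$.

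For the stakeholder's payoffs, I would subtract the common constant $\tfrac12(X-I)$ and divide by $c$. The normalized payoffs are then
\begin{align*}
u_J&=-\tfrac12(1-\beta_J)(1-\bar a)\,i-\tfrac q2(1-\beta_J)\tfrac{i^2}{2}-(1-q)\bigl(1-\tfrac{\beta_J}{2}\bigr)\tfrac{i^2m(\beta_J)^2}{2},\\
u_C&=-\tfrac12(1-\beta_C)(1-a_C)\,i-\bigl(1-\tfrac{\beta_C}{2}\bigr)\tfrac{a_C^2}{2}.
\end{align*}
Both depend only on $(q,i,l)$, so $q^\dagger$ is defined by $G(q,i,l)\equiv u_J-u_C=0$.

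\textbf{The threshold $q^\ddagger$.} By the remark after Proposition \ref{Prop: mainAuditor}, $q^\ddagger$ solves $\bar a(\beta_J(q),q)=a_C$. At that point $\beta_J=\beta_C$, because both equal $\tfrac l2$ times the same verification level. This gives the closed form
\[
q^\ddagger=\frac{a_C/i-m(\beta_C)}{1-m(\beta_C)} .
\]
Everything then reduces to the dependence of $a_C$ and $\beta_C$ on $(i,l)$.
\begin{itemize}
\item Dependence on $l$: $m$ is decreasing in $\beta$, and $\partial\beta_C/\partial l>0$ by Corollary \ref{coroC}. The key quantity is the ratio $a_C/i$. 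Implicit differentiation of \eqref{equilibriumC} gives $\partial a_C/\partial l$, and I would show that the net effect on $q^\ddagger$ is positive.
\item Dependence on $i$: I would show that $a_C/i$ rises relative to $m(\beta_C)$ as $i$ increases, or else sign the combined derivative directly using \eqref{equilibriumC}.
\end{itemize}
In both cases the plan is to substitute $a_C$ as an explicit root of the quadratic, or equivalently to use \eqref{equilibriumC} to eliminate $a_C^2$, which reduces each derivative to a rational function whose sign can be checked on $i\in(0,1)$ and $l\in(0,1)$.

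\textbf{The threshold $q^\dagger$.} By the implicit function theorem,
\[
\frac{\partial q^\dagger}{\partial x}=-\frac{G_x}{G_q},\qquad x\in\{i,l\}.
\]
Part 2 of Lemma \ref{Lemma: ExtremeSignals} gives $G_q>0$, so it suffices to show $G_i>0$ and $G_l<0$ at $q=q^\dagger$.
\begin{itemize}
\item For $u_C$, the envelope theorem applies because $a_C$ is chosen optimally with commitment. Thus $\partial u_C/\partial x$ is the partial derivative holding $a_C$ fixed, acting only through $i$ and through $\beta_C=\tfrac l2a_C$.
\item For $u_J$ there is no envelope theorem, since $\beta_J$ is not chosen by the stakeholder. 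Here I need $d\beta_J/dx$ from the fixed point $\beta=\tfrac l2\, i\,[q+(1-q)m(\beta)]$. This fixed point is a quadratic in $\beta$, so $\beta_J$ has a closed form.
\end{itemize}

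\textbf{Expected main obstacle.} The hard step is signing $G_i$ and $G_l$ at $q^\dagger$. $G$ itself has no clean closed-form zero. I would first try to use $G=0$ to eliminate $q$, or the $q$-terms, from $G_x$. I would then exploit the fact that the stakeholder's ex-post choices in Regime J are optimal conditional on $\beta_J$. As a result, only the $\beta_J$-channel, $(\partial u_J/\partial\beta)\,d\beta_J/dx$, and the direct $x$-channel survive.

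For $l$, the Regime C effect is favorable because commitment fully leverages $\partial\beta_C/\partial l$. The Regime J effect is damped by the reduction $\partial\bar a/\partial l<0$ in Corollary \ref{coroJ}. I expect a comparison of the form
\[
\frac{\partial u_C}{\partial l}=\tfrac{a_C}{2}\cdot\frac{\partial u_C}{\partial \beta}\Big|_{a_C}
>\frac{\partial\beta_J}{\partial l}\cdot\frac{\partial u_J}{\partial\beta}
\]
to go through using $\beta_J$ versus $\beta_C$ orderings implied by Lemma \ref{Lemma: compareofa}.

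If the analytic signing fails, the fallback is a two-step argument. First, show $G_x$ is monotone in $q$. Second, verify its sign at the endpoint bounds on $q^\dagger$, namely at $q=0$ and at $q=1$, where everything is explicit, and conclude the sign throughout.
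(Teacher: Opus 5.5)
\textbf{Main gap: signing $G_i$ and $G_l$ at $q^\dagger$.} Your setup is sound: the normalized payoffs, the representation $q^\ddagger=\frac{a_C/i-m(\beta_C)}{1-m(\beta_C)}$ (equivalent to \eqref{qddagger}), and the implicit-function framework $\partial q^\dagger/\partial x=-G_x/G_q$ with $G_q>0$ and the envelope theorem for $u_C$. The gap is the step you flag yourself. Nothing in the proposal signs $G_i$ and $G_l$ at $q^\dagger$, and neither suggested route can, because both partials change sign on $[0,1]$.

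For $l$, let $f(a)\equiv\frac{ia(1-a)}{4}+\frac{a^3}{8}$. The envelope theorem gives $du_C/dl=f(a_C)$. At $q=1$, $u_J=-\frac i2\bigl(1-\frac{il}{2}\bigr)\bigl(1-\frac i2\bigr)$, so $du_J/dl=f(i)$. Since $f'(a)=\frac38a^2-\frac i2a+\frac i4>0$ for $i<\frac32$, and $a_C<i$, we get $G_l(q=1)>0$ for all parameters. You need $G_l(q^\dagger)<0$. The orderings $\beta_J(0)<\beta_C<\beta_J(1)$ from Lemma \ref{Lemma: compareofa} say nothing about where $q^\dagger$ sits, so they cannot deliver your conjectured inequality.

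For $i$ it is more delicate. First, $G_i(q=0)<0$; this is the widening of $|\Delta_1|$ drawn in Figure \ref{fig:qdaggeri}. Second, expanding in small $i$ gives $G\approx\frac{i^2}{32}\bigl[4(1+l)q-l^2\bigr]$. So to leading order $G_i\approx 2G/i$, which vanishes exactly at $q^\dagger$, and the sign of $G_i(q^\dagger)$ is a higher-order effect. Numerically, at $i=l=\frac12$ (where $q^\dagger\approx0.03$): $G_i(0)\approx-4\times10^{-3}$, $G_i(q^\dagger)\approx+10^{-3}$, and $G_i(1)\approx0.16$.

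Your fallback (show $G_x$ monotone in $q$, then check $q=0$ and $q=1$) therefore returns opposite signs at the endpoints and proves nothing. No ordering argument will resolve this; explicit algebra is required. Since $u_J=-\frac i2(1-\beta_J)\bigl(1-\frac{\beta_J}{l}\bigr)$, the condition $u_J=u_C$ is a quadratic in $\beta^\dagger$, and \eqref{equilibriumJ_Linear} gives $q^\dagger=\frac{(4+il)\beta^\dagger-2(\beta^\dagger)^2}{il}-1$. This yields the closed form \eqref{qdaggerlinear}, which is what the paper differentiates. The paper omits that algebra, but your plan does not supply a working substitute.

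\textbf{The threshold $q^\ddagger$.} Your representation makes the $l$-part immediate, more cleanly than the paper's computation \eqref{DqddaggerDl}. Write $q^\ddagger=1-\frac{1-a_C/i}{1-m(\beta_C)}$, with $1-a_C/i>0$ by Lemma \ref{Lemma: compareofa}. By Corollary \ref{coroC}, raising $l$ raises $a_C/i$ and raises $\beta_C$ (so lowers $m(\beta_C)$). Both move $q^\ddagger$ up, so there is no net effect to weigh.

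For $i$, however, your heuristic points the wrong way. Set $x\equiv a_C/i$; then \eqref{equilibriumC} becomes $3lix^2-(8+4il)x+2(l+2)=0$, so $\partial x/\partial i=\frac{lx(3x-4)}{8+4il-6lix}<0$. Thus $a_C/i$ falls with $i$. It must fall by enough to outweigh the decline in $m(\beta_C)$, which on its own pushes $q^\ddagger$ up. If $a_C/i$ rose relative to $m(\beta_C)$, as you propose to show, $q^\ddagger$ would increase. Only your ``sign the combined derivative directly'' branch is viable here, and carrying it out is precisely the paper's computation \eqref{DqddaggerDi}.
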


Proposition \ref{Prop: ComparativeStaticTradeoffStake} offers two insights. First, a decrease in verification incentive, captured by $\frac{I}{c}$, makes Regime C more attractive relative to Regime J. The intuition behind this asymmetric impact lies in the timing of the verification decision and its effect on auditor deterrence. When the verification incentive weakens (e.g., due to higher verification costs), the resource allocation effect of private information becomes weaker, and the stakeholder reduces verification efforts. In the Judgment Regime, this decision occurs ex-post; the stakeholder cannot credibly threaten to verify a suspected failure if the immediate costs outweigh the benefits. Anticipating this ex-post reluctance, the auditor reduces her effort, which also weakens the deterrence value of the stakeholder's private information. In contrast, Regime C avoids this issue because it allows the stakeholder to choose verification levels ex-ante. Consequently, while weaker verification incentives harm both regimes, the ex-ante commitment mechanism partially insulates Regime C by preserving its deterrence power. 

Second, an increase in audit incentive, captured by $\frac{L}{k}$, makes Regime C more attractive relative to Regime J. Intuitively, both the deterrence power of commitment and private information become stronger when it is easier for the stakeholder to motivate the auditor to exert effort in auditing. However, in Regime J, the deterrence effect of private information is partially offset by the stakeholder's anticipation of higher auditor effort and the corresponding reduction in the verification level. Thus, while both mechanisms become more valuable when the audit incentive increases, the value of commitment rises more, reducing the relative appeal of Regime J.

Figures \ref{fig:qdaggeri} and \ref{fig:qdaggerl} illustrate the above two insights for $q^\dagger$.\footnote{The figures are schematic and not drawn to scale.} These upward-sloping lines, $G(\cdot)$, represent the relative value or relative advantage of private information (Regime J) over the commitment (Regime C) from the stakeholder's perspective, before and after the change in the relevant incentive. Their vertical intercepts, $\mathit{Deter}(\cdot)$, reflect the relative value of commitment, and their crossing points with the $q$-axis mark $q^\dagger$. As shown in Figure \ref{fig:qdaggeri}, an increase in verification incentive from $i$ to $i'$ increases both the absolute value of the intercept and the slope, which causes the line to shift downward and rotate counterclockwise, making $q^\dagger$ smaller. Conversely, an increase in audit incentive from $l$ to $l'$ causes the value of commitment to rise more than that of private information, shifting $q^\dagger$ to the right (Figure \ref{fig:qdaggerl}). 
\begin{figure}[H]
\centering
\caption{\textbf{Illustration of $\partial q^\dagger/\partial i<0$}}
\label{fig:qdaggeri}
    \includegraphics[width=1\linewidth]{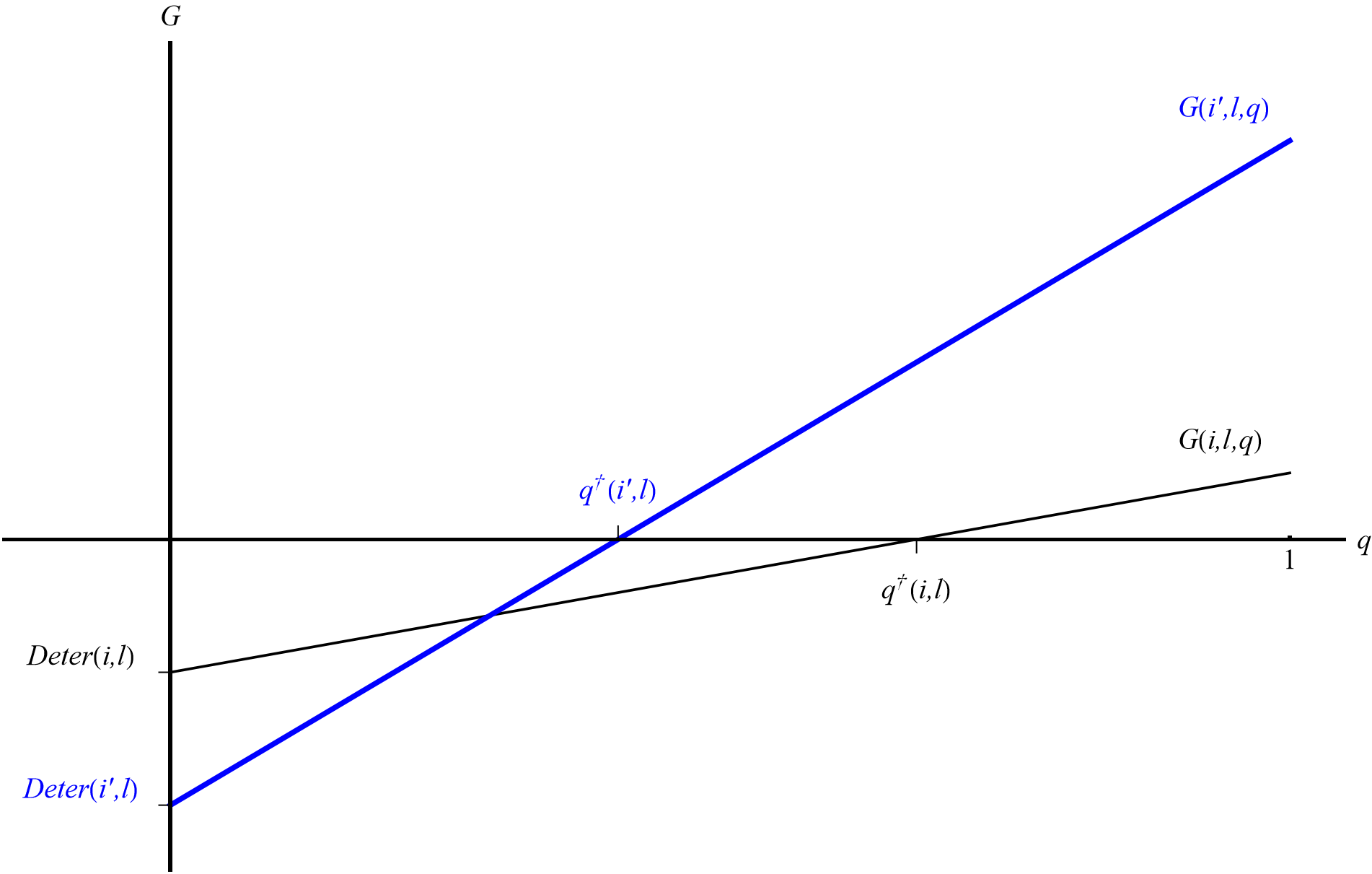}
\end{figure}

\begin{figure}[H]
\centering
\caption{\textbf{Illustration of $\partial q^\dagger/\partial l>0$}}
\label{fig:qdaggerl}
    \includegraphics[width=1\linewidth]{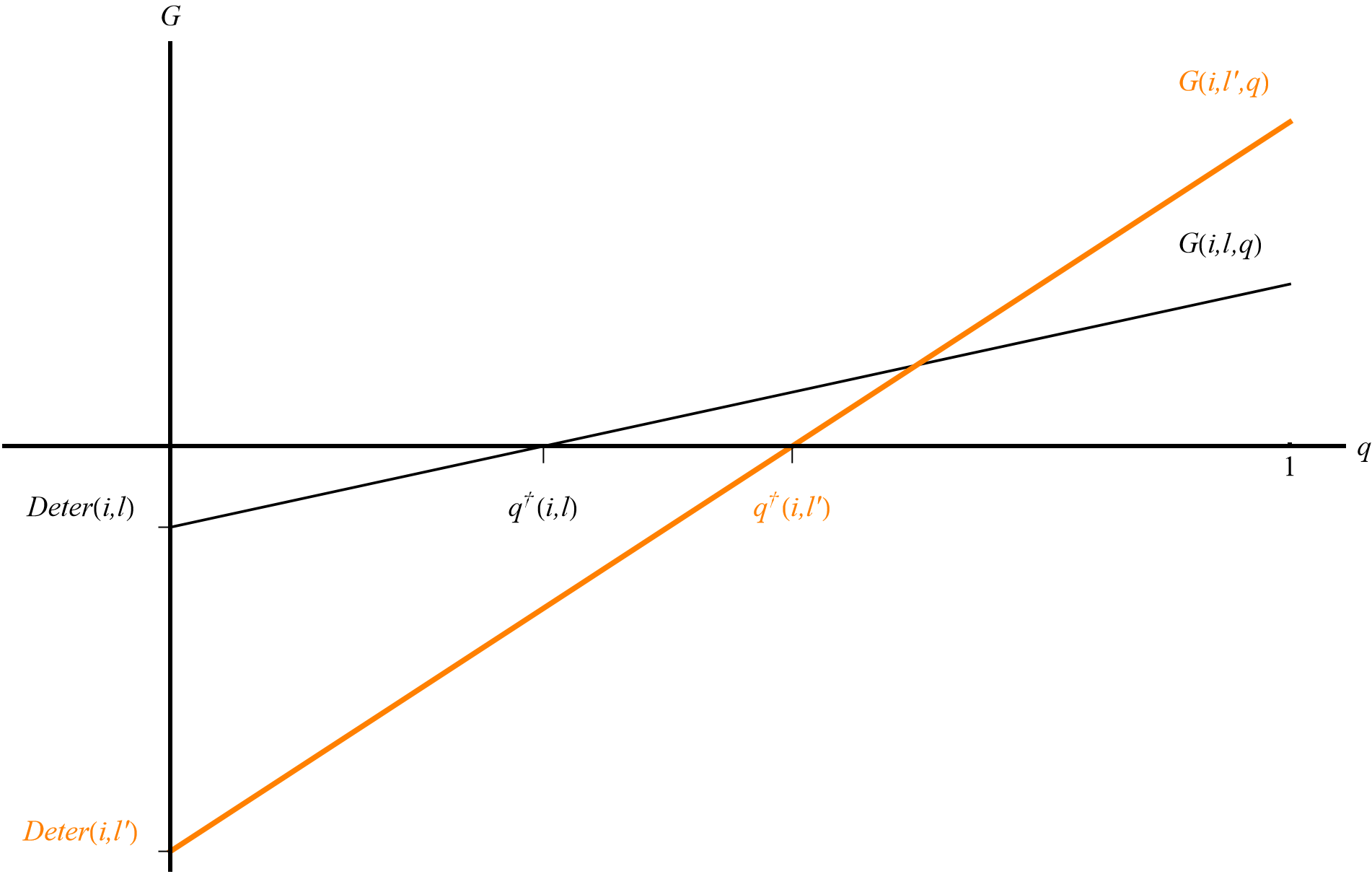}
\end{figure}

\subsection*{Empirical Implications}
\label{Section: EmpiricalImplications}
As \citet{lohlein2016peer} note, research confirms that both peer review and PCAOB inspection can enhance audit quality, though findings on how these approaches are recognized in decision-making remain mixed. Recent studies show that PCAOB inspections have improved reporting credibility and accountability through mechanisms such as stronger enforcement, greater public oversight, and partner identification requirements \citep{defond2010should, burke2019audit, gipper2020public}. However, these effects are confined to specific dimensions of audit quality and vary across inspection frequency, outcomes, firm type, and client characteristics \citep{gunny2013pcaob, lamoreaux2016does, fung2017does, aobdia2018impact, cunningham2019s}. For instance, \citet{khurana2021pcaob} find that PCAOB inspections primarily improve audit quality among Big 4 firms. Other studies suggest that inspections may reshape the audit market by disproportionately burdening smaller firms, reducing their ability to accept or retain public clients and strengthening their incentives to exit the market \citep{daugherty2010pcaob, defond2011effect, aobdia2017regulatory}.

Our model offers a potential explanation for this heterogeneity: the effectiveness of PCAOB inspection (Regime C) relative to peer review (Regime J) depends critically on the underlying verification and audit incentives, which likely differ across firm size and client complexity. When verification costs are high or audit incentives are strong, conditions that are more prevalent in complex audit engagements, our comparative statics predict that the PCAOB inspection commitment mechanism should dominate. This prediction is consistent with the stronger effects documented for larger and more complex audits.

The model also predicts opposite relations between auditor incentives and verification intensity across the two regimes. Under peer review, stronger auditor incentives reduce ex-post verification:
$
\frac{\partial \bar{a}}{\partial l}<0.
$
 Under PCAOB inspection, stronger auditor incentives increase committed verification:
$
\frac{\partial a_C}{\partial l}>0.
$ Thus, ceteris paribus, verification intensity should be negatively associated with auditor incentives under peer review but positively associated with auditor incentives under PCAOB inspection. 

As shown in Section \ref{Section: AuditorandRisk}, audit risk is determined by the verification level in each regime. Therefore, the above predictions relate to a vast empirical literature examining how audit incentives, regulatory oversight, and litigation risk affect audit risk and audit quality \citep[e.g.,][]{aobdia2018impact, christensen2024costs, johnson2019auditors, westermann2019pcaob, chen2025pcaob}. Our model further predicts that the association between audit incentives and verification intensity differs systematically between peer review and PCAOB inspection.

\section{Conclusions}
\label{Section: Conclusions}
This paper develops a theoretical framework to evaluate the fundamental trade-off between the peer review system (Judgment Regime) and government inspection (Commitment Regime). We demonstrate that the Judgment Regime benefits from ex-post private information, which improves resource allocation and deters audit failures. Conversely, the Commitment Regime relies on the deterrence power of ex-ante commitment. We find that the stakeholder prefers the Judgment Regime only when the peer reviewer's private signal is sufficiently informative; otherwise, the Commitment Regime dominates. Furthermore, our comparative statics show that higher verification costs or stronger audit incentives make the Commitment Regime relatively more attractive. Our findings provide a theoretical foundation for the mixed empirical evidence regarding the shift from AICPA peer reviews to PCAOB inspections, suggesting that the optimal regime depends heavily on the auditor's incentive, the stakeholder's verification incentive, and the strength of the stakeholder's private information. 

Future research could expand upon our framework by exploring other potential drawbacks of the peer review system. For example, while our current model highlights the inability to commit to a level of effort as the key disadvantage of peer review, future work could incorporate the reciprocal nature of these relationships, which can severely compromise auditor independence. In addition, future studies could investigate how differences in auditors' industry specializations and expertise might lead to the adoption of varying standards, resulting in a lack of consistency. Finally, future studies could examine the consequences of absent external oversight, specifically how weak enforcement of audit standards arises when peer reviewers lack the necessary authority or motivation to effectively police non-compliance. 

\newpage
\appendix
\section{Proofs}
\subsection{Proof of Proposition \ref{Prop: equilibriumJ}}
We have derived the verification level and the audit effort in equations \eqref{aw} and \eqref{equilibriumJ}, which are determined by the root of $F_J(\beta)$ in \eqref{FJbeta}. To prove the existence of a unique equilibrium in Regime J, we verify the three properties of $F_J(\beta)$ stated in the main text. Plugging $\Pr(\theta=B)=1/2$ into $F_J(\beta)$ gives
\begin{align*}
    F_J(\beta)=\beta-\frac{LI}{2kc}\sum_t \frac{(1-\beta)[\Pr(s=\omega_t|\theta=B)]^2}{(1-\beta)\Pr(s=\omega_t|\theta=B)+\Pr(s=\omega_t|\theta=G)}.
\end{align*}

Evaluated at $\beta=0$ and $\beta\to1^-$, we have
\begin{align*}
    F_J(0)&=-\frac{LI}{2kc}\sum_t \frac{[\Pr(s=\omega_t|\theta=B)]^2}{\Pr(s=\omega_t|\theta=B)+\Pr(s=\omega_t|\theta=G)}<0,\\
    \lim_{\beta\to1^-}F_{J}(\beta)&=1-\frac{LI}{2kc}\sum_{t:\Pr(s=\omega_t|\theta=G)=0}\Pr(s=\omega_t|\theta=B)\geq1-\frac{LI}{2kc}>0.
\end{align*}
Moreover,
\begin{align*}
F_J'(\beta) = 1 + \frac{LI}{2kc} \sum_t \frac{[\Pr(s=\omega_t|\theta=B)]^2 \Pr(s=\omega_t|\theta=G)}{[(1-\beta)\Pr(s=\omega_t|\theta=B) + \Pr(s=\omega_t|\theta=G)]^2}>0,
\end{align*}
which implies the increasing monotonicity of $F_J(\beta)$. Collecting $F_J(0)<0$, $\lim_{\beta\to1^-}F_J(\beta)>0$, $F^{'}_J(\beta)>0$, and the continuity of $F_J(\beta)$, there exists a unique solution for $F_J(\beta)=0$ by the Intermediate Value Theorem. This proves Proposition \ref{Prop: equilibriumJ}.

\subsection{Proof of Proposition \ref{Prop: beta}}
Consider any two signals $s_1$ and $s_2$ in the following proof, such that $s_1$ and $s_2$ are independent of $r$ conditional on $\theta$, and $s_1$ Blackwell dominates $s_2$.

\textbf{Proof of Part 1}

Based on equation \eqref{bara} and expression \eqref{FJbeta},
\begin{align*}
\bar{a}&=\frac{I}{c}\sum_t \frac{(1-\beta_J)[\Pr(s=\omega_t|\theta=B)]^2}{(1-\beta_J)\Pr(s=\omega_t|\theta=B)+\Pr(s=\omega_t|\theta=G)}\\
&=\frac{2I(1-\beta_J)}{c}\sum_t \frac{\Pr(s=\omega_t)\Pr(\theta=B|s=\omega_t)^2}{(1-\beta_J)\Pr(\theta=B|s=\omega_t)+\Pr(\theta=G|s=\omega_t)}\\
&=\frac{2I(1-\beta_J)}{c}\sum_t \frac{\Pr(s=\omega_t)\Pr(\theta=B|s=\omega_t)^2}{1-\beta_J\Pr(\theta=B|s=\omega_t)}\\
&=\frac{2I(1-\beta_J)}{c}E\left[\frac{\Pr(\theta=B|s)^2}{1-\beta_J\Pr(\theta=B|s)}\right].
\end{align*}
The second equality applies Bayes' rule to calculate $\Pr(\theta=B|s=\omega_t)$ and $\Pr(\theta=G|s=\omega_t)$. The third equality uses $\Pr(\theta=G|s=\omega_t)=1-\Pr(\theta=B|s=\omega_t)$. The last equation rewrites the result in expectation form.

Since $s_1$ Blackwell dominates $s_2$ (i.e., $q_1>q_2$), the posterior induced by $s_1$ is a mean-preserving spread of the posterior induced by $s_2$. By the definition of mean-preserving spread, $E\left[\frac{\Pr(\theta=B|s_1)^2}{1-\beta_J\Pr(\theta=B|s_1)}\right]>E\left[\frac{\Pr(\theta=B|s_2)^2}{1-\beta_J\Pr(\theta=B|s_2)}\right]$ holds because the function is strictly convex with respect to the posterior belief. Therefore, $\bar{a}(\beta_J,q_1)>\bar{a}(\beta_J,q_2)$, which implies that $\bar{a}(\beta_J,q)$ is increasing in $q$. This proves Part 1.

\textbf{Proof of Part 2}

Based on \eqref{aw}, it is obvious that $a(\omega_t)$ decreases in $\beta_J$, as $\Pr(\theta=B|r=g,s=\omega_t)$ decreases in $\beta_J$. According to \eqref{bara}, $\bar{a}(\beta_J,q)$ decreases in $\beta_J$ as well. 

We now prove Part 2 through contradiction. Suppose  $\beta_J(q_1)\le\beta_J(q_2)$. By equation \eqref{betaJ}, we have 
\begin{align*}
	\beta_J(q_1)&=\frac{L}{2k}\bar{a}(\beta_J(q_1),q_1)\\
	&\ge\frac{L}{2k}\bar{a}(\beta_J(q_2),q_1)\\
	&> \frac{L}{2k}\bar{a}(\beta_J(q_2),q_2)\\
	&=\beta_J(q_2).
\end{align*}
The first inequality holds as $\beta_J(q_1)\le\beta_J(q_2)$ and $\bar{a}(\beta_J,q)$ decreases in $\beta_J$. The second inequality follows from Part 1 of Proposition \ref{Prop: beta}. The last line is a contradiction to $\beta_J(q_1)\le\beta_J(q_2)$. Thus, we must have $\beta_J(q_1)>\beta_J(q_2)$. This proves Part 2. 

\textbf{Proof of Part 3}
\begin{align*}
\bar{a}(\beta_J(q_1),q_1)&=\frac{2k}{L}\beta_J(q_1)\\
                        &>\frac{2k}{L}\beta_J(q_2)\\
                        &=\bar{a}(\beta_J(q_2),q_2).
\end{align*}
The first and the third lines hold due to equation \eqref{betaJ}. The second line follows from Part 2 of the Proposition. Thus, $\bar{a}(\beta_J(q),q)$ increases in $q$. This proves the last part of the Proposition.

\subsection{Proof of Corollary \ref{coroJ}}
We first examine how audit and verification incentives affect $\beta_J$. 
Applying the Implicit Function Theorem to equation \eqref{equilibriumJ} yields
\begin{align}
    \frac{\partial \beta_J}{\partial l}=\frac{2\beta_J}{l\left(2 + i l \sum_t \frac{[\Pr(s=\omega_t|\theta=B)]^2 \Pr(s=\omega_t|\theta=G)}{[(1-\beta_J)\Pr(s=\omega_t|\theta=B) + \Pr(s=\omega_t|\theta=G)]^2} \right)} > 0,\label{betaJonl}\\
    \frac{\partial \beta_J}{\partial i} = \frac{2\beta_J}{i\left(2 + i l \sum_t \frac{[\Pr(s=\omega_t|\theta=B)]^2 \Pr(s=\omega_t|\theta=G)}{[(1-\beta_J)\Pr(s=\omega_t|\theta=B) + \Pr(s=\omega_t|\theta=G)]^2} \right)} > 0.\label{betaJoni}
\end{align}
The two inequalities in equations \eqref{betaJonl} and \eqref{betaJoni} hold due to positive signs of $\beta_J$ and other components. 

We next examine how audit and verification incentives affect $\bar{a}$. Recall from equation \eqref{betaJ} that $\bar{a}=\frac{2\beta_J}{l}$. Therefore,
\begin{align}
\label{baraoni}
    \frac{\partial \bar{a}}{\partial i}=\frac{2}{l} \frac{\partial \beta_J}{\partial i}>0.
\end{align}
This inequality follows from $\frac{\partial \beta_J}{\partial i}>0$, which has just been proved in \eqref{betaJoni}.

Similarly, taking the derivative of equation \eqref{betaJ} with respect to $l$ yields
\begin{align}
\label{baraonl}
    \frac{\partial \bar{a}}{\partial l}=\frac{2l\frac{\partial \beta_J}{\partial l}-2\beta_J}{l^2} = -\frac{2\beta_J}{l^2} \left( \frac{i l \sum_t \frac{[\Pr(s=\omega_t|\theta=B)]^2 \cdot \Pr(s=\omega_t|\theta=G)}{\left[(1-\beta_J)\Pr(s=\omega_t|\theta=B) + \Pr(s=\omega_t|\theta=G)\right]^2}}{2 + i l \sum_t \frac{[\Pr(s=\omega_t|\theta=B)]^2 \cdot \Pr(s=\omega_t|\theta=G)}{\left[(1-\beta_J)\Pr(s=\omega_t|\theta=B) + \Pr(s=\omega_t|\theta=G)\right]^2}}\right)<0.
\end{align}
The second equality uses equation \eqref{betaJonl} and the inequality holds obviously. Thus, we have proved Corollary \ref{coroJ}.

\subsection{Proof of Proposition \ref{Prop: equilibriumC}}
The equilibrium verification level in Regime C is defined in equation \eqref{equilibriumC} in the main text, which is reproduced here
\begin{align*}
    3l a^{2} - (8 + 4il)a + 2i(l + 2) = 0.
\end{align*}

Denoting the LHS of this equation as $F_C(a)$, we can easily verify that $F_C(0)=2i(l+2)>0$, $F_C(1)=3l-2il+4i-8<0$, and $F_C^{'}(a)=6al-(8+4il)<0$. Thus,  by the Intermediate Value Theorem and the continuity of $F_C(a)$, there exists a unique solution $a_C\in(0,1)$ such that $F_C(a_C)=0$. This proves the existence and uniqueness of the equilibrium in Regime C.

\subsection{Proof of Corollary \ref{coroC}}
Applying the Implicit Function Theorem to equation \eqref{equilibriumC} yields
\begin{align}
    \frac{\partial a_C}{\partial i}&=\frac{l+2-2la_C}{2il+4-3la_C}>0,\label{aConi}\\
    \frac{\partial a_C}{\partial l}&=\frac{3a_C^2-4ia_C+2i}{8+4il-6la_C}=\frac{3(a_C-\frac{2}{3}i)^2+i(2-\frac{4}{3}i)}{8+4il-6la_C}>0.\label{aConl}
\end{align}
The inequality in \eqref{aConi} holds since both the denominator and the numerator are positive. The second equality in \eqref{aConl} rewrites the result of $\frac{\partial a_C}{\partial l}$. These two show the positive effects of verification and audit incentives on $a_C$.

Differentiating \eqref{expressionbetaC} with respect to incentives $i$ and $l$ leads to 
\begin{align*}
    \frac{\partial \beta_C}{\partial i}&=\frac{l}{2}\frac{\partial a_C}{\partial i},\\
    \frac{\partial \beta_C}{\partial l}&=\frac{a_C+l\frac{\partial a_C}{\partial l}}{2}.
\end{align*}
Both $\frac{\partial \beta_C}{\partial i}>0$ and $\frac{\partial \beta_C}{\partial l}>0$ follow from $\frac{\partial a_C}{\partial i}>0$ and $\frac{\partial a_C}{\partial l}>0$. Therefore, this proves Corollary \ref{coroC}.

\subsection{Proof of Lemma \ref{Lemma: ExtremeSignals}}
Our proof proceeds in three steps. First, we show that the stakeholder's expected utility in Regime J can be expressed as a function of the audit effort $\beta_J(q)$ and the critical verification level $\bar{a}(\beta_J(q),q)$. Then, we compare the equilibrium decisions in Regime C with decisions in Regime J with an uninformative signal (i.e., $q=0$). Finally, we prove three parts of the Lemma separately.

\subsubsection*{Step 1: Show that $U_J$ is a function of $\beta_J(q)$ and $\bar{a}(\beta_J(q),q)$}

\begin{align*}
	&U_{J}=\sum_t \Pr(r=g,s=\omega_t)(\Pr(\theta=G|r=g,s=\omega_t)(X-I)\\
    &\quad+(1-a(\omega_t))\Pr(\theta=B|r=g,s=\omega_t)(-I) -\frac{1}{2}ca(\omega_t)^2)\\
	&=\sum_t (\Pr(\theta=G,r=g,s=\omega_t)(X-I)-(1-a(\omega_t))\Pr(\theta=B,r=g,s=\omega_t)I\\
	&\quad -\frac{\Pr(r=g,s=\omega_t)ca(\omega_t)^2}{2})\\
    &=\Pr(\theta=G)(X-I)-I\Pr(\theta=B)\sum_t\Pr(r=g,s=\omega_t|\theta=B)(1-a(\omega_t))\\
    &\quad -\frac{\sum_t\Pr(r=g,s=\omega_t)ca(\omega_t)^2}{2}\\
    &=\Pr(\theta=G)(X-I)-I(1-\beta_J)\Pr(\theta=B)(1-\sum_t \Pr(s=\omega_t|\theta=B)a(\omega_t))\\
    &\quad-I\sum_t\Pr(\theta=B,r=g,s=\omega_t)\frac{a(\omega_t)}{2}\\
    &=\Pr(\theta=G)(X-I)-I\Pr(\theta=B)(1-\beta_J)(1-\bar{a})-I\Pr(\theta=B)(1-\beta_J)\frac{\bar{a}}{2}\\
    &=\Pr(\theta=G)(X-I)-\Pr(\theta=B)(1-\beta_J(q))(1-\frac{\bar{a}(\beta_J(q),q)}{2})I.
\end{align*}
The second equality uses Bayes' rule to calculate $\Pr(\theta=G|r=g,s=\omega_t)$ and $\Pr(\theta=B|r=g,s=\omega_t)$. The third equality uses $\sum_t\Pr(\theta=G,r=g,s=\omega_t)=\Pr(\theta=G)$ and $\Pr(\theta=B,r=g,s=\omega_t)=\Pr(\theta=B)\Pr(r=g,s=\omega_t|\theta=B)$. The fourth equality uses equation \eqref{aw} and the conditional independence of $r$ and $s$. The fifth equality rewrites $\Pr(\theta=B,r=g,s=\omega_t)$ again and uses equation \eqref{bara} that $\bar{a}=\sum_t\Pr(s=\omega_t|\theta=B,r=g)a(\omega_t)=\sum_t\Pr(s=\omega_t|\theta=B)a(\omega_t)$. The last line shows that the stakeholder's expected utility can be expressed as a function of $\beta_J(q)$ and $\bar{a}(\beta_J(q),q)$. That is, 
\begin{equation}
\label{SimplifiedUJ}
    U_J(\beta_J(q),\bar{a}(\beta_J(q),q))=\frac{1}{2}(X-I)-\frac{1}{2}(1-\beta_J(q))(1-\frac{\bar{a}(\beta_J(q),q)}{2})I.
\end{equation}

\subsubsection*{Step 2: Compare $(\beta_J(q=0),\bar{a}(\beta_J(q=0),q=0))$ with $(\beta_C,a_C)$}
Solving the equilibrium in Regime J with an uninformative signal $s$ (i.e., equations \eqref{aw}, \eqref{bara} and \eqref{equilibriumJ} at $q=0$), we have
\begin{align}
\bar{a}(\beta_J(q=0),q=0)&=\frac{4 + il - \sqrt{16 + i^2l^2}}{2l},\label{Exp: ag}\\
\beta_J(q=0)&=\frac{4 + il - \sqrt{16 + i^2l^2}}{4}.\label{Exp: betaJempty}
\end{align}

Solving equations \eqref{expressionbetaC} \eqref{equilibriumC} leads to the equilibrium in Regime C
\begin{align}
    a_C&=\frac{4 + 2i l - \sqrt{16 + 4il + 2il^2( 2i-3)}}{3l},\label{Exp: aC}\\
    \beta_C&=\frac{4 + 2i l - \sqrt{16 + 4il + 2il^2( 2i-3)}}{6}.\label{Exp: betaC}
\end{align}

Let $n\equiv\sqrt{16 + i^2l^2}$ and $m\equiv\sqrt{16 + 4il + 2il^2(2i-3)}$. It can be easily verified that $3n>2m$ since $(3n-2m)(3n+2m)=80-16il+24il^2-7i^2l^2>0$. Comparing $\bar{a}(\beta_J(q=0),q=0)$ and $a_C$ yields
\begin{align*}
   &a_C-\bar{a}(\beta_J(q=0),q=0)\\
    =\quad &\frac{3n-2m-(4-il)}{6l}\\
    =\quad &\frac{2 \left(16 + 2 i l (1 - (1 - i) l) - m n\right)}{l\left(3n-2m+(4-il)\right)}\\
    =\quad &\frac{4\left(6i l^2(2-i)+4i l^2(1-i l) + i^3 l^3(2-l) + 2 i^2 l^4\right) }{l\left(3n-2m+(4-il)\right)\left(16 + 2 i l (1 - (1 - i) l) + m n\right)}>0.
\end{align*}
These lines use basic rules of algebra. The inequality in the last line holds as both the numerator and denominator are positive ($0 < i < 1$, $0 < l < 1$, and $3n > 2m$). This proves $a_C>\bar{a}(\beta_J(q=0),q=0)$.

Then, $\beta_C>\beta_J(q=0)$ follows from equations \eqref{betaJ}, \eqref{expressionbetaC}, and $a_C>\bar{a}(\beta_J(q=0),q=0)$.

\subsubsection*{Step 3: Prove three parts of the Lemma}
\textbf{Proof of Part 1: $U_{J}(q=0)<U_{C}$}

The stakeholder's ex-ante utility in Regime J with an uninformative signal can be computed as
\begin{align*}
		U_{J}(q=0)& =\Pr(\theta=G)(X-I)-\Pr(\theta=B)\left(1-\beta_J(q=0)\right)(1-\bar{a}(q=0))I\\
        &\quad -\left(\Pr(\theta=G)+\Pr(\theta=B)(1-\beta_J(q=0))\right)\frac{1}{2}c \bar{a}(q=0)^2.
\end{align*}

Recall that the stakeholder's utility in Regime C can be computed as
\begin{align*}
	U_{C}& =\Pr(\theta=G)(X-I)-\Pr(\theta=B)\left(1-\beta_C\right)(1-a_C)I\\
    &\quad -\left(\Pr(\theta=G)+\Pr(\theta=B)(1-\beta_C)\right)\frac{1}{2}c a_C^2.
\end{align*}

Since equations \eqref{betaJ} and \eqref{expressionbetaC} indicate that $\bar{a}(q=0)=\frac{\beta_J(q=0)}{l\Pr(\theta=B)}$ and $a_C=\frac{\beta_C}{l\Pr(\theta=B)}$, the stakeholder in these two cases shares the same utility form, which can be written as
\begin{align*}
	U_{R_0}& =\Pr(\theta=G)(X-I)-\Pr(\theta=B)\left(1-\beta_{R_0}\right)(1-a_{R_0})I\\
    &\quad -\left(\Pr(\theta=G)+\Pr(\theta=B)(1-\beta_{R_0})\right)\frac{1}{2}c a_{R_0}^2,
\end{align*}
where $R_0\in\{J_0,C\}$ denotes the Regime J with $q=0$ or Regime C.

Plugging $\beta_{R_0}=\Pr(\theta=B) l a_{R_0}$ into $U_{R_0}$, and differentiating with respect  to $a_{R_0}$ leads to
\begin{align} \label{EquiaR0}
\frac{3}{2}l\Pr(\theta=B)^2 a_{R_0}^2-\left(1+2il\Pr(\theta=B)^2\right)a_{R_0}+
i\Pr(\theta=B)(1+l\Pr(\theta=B))=0.
\end{align}
Since equation \eqref{EquiaR0} is essentially the same as \eqref{equilibriumC}, the unique solution for this equation equals $a_C$ and $U_{R_0}$ increases in $a_{R_0}$ over the interval $(0,a_C]$. Thus, $U_{J}(q=0)<U_{C}$ follows from $\bar{a}(\beta_J(q=0),q=0)<a_C$ in Step 2.

\textbf{Proof of Part 2: $\frac{dU_J(\beta_J(q),\bar{a}(\beta_J(q),q))}{dq}>0$}

Recall from Step 1 that the stakeholder's ex-ante utility in Regime J is
\begin{align*}
    U_J(\beta_J(q),\bar{a}(\beta_J(q),q))=\Pr(\theta=G)(X-I)-\Pr(\theta=B)(1-\beta_J(q))(1-\frac{\bar{a}(\beta_J(q),q)}{2})I.
\end{align*}
Therefore, $\frac{dU_J(\beta_J(q),\bar{a}(\beta_J(q),q))}{dq}>0$ follows from $\frac{d\beta_J(q)}{dq}>0$ and $\frac{d\bar{a}(\beta_J(q),q)}{dq}>0$ (Parts 2 and 3 of Proposition \ref{Prop: beta}).

\textbf{Proof of Part 3: $U_{J}(q=1)>U_{C}$}

Recall that the stakeholder's ex-ante utility with a perfectly informative signal is
\begin{align*}
	U_{J}(q=1)=\frac{1}{2}(X-I)-\frac{1}{2}\left(1-\beta_J(q=1)\right)(1-\frac{\bar{a}(q=1)}{2})I,
\end{align*}
where 
\begin{align}
\bar{a}(\beta_J(q=1),q=1)&= i, \label{baraq1}\\
\beta_J(q=1)&=\frac{i l}{2}. \label{betaJq1}
\end{align}
Rearranging and simplifying $U_{J}(q=1)-U_{C}$, we have
\begin{align*}
	&U_{J}(q=1)-U_{C} \quad \propto \quad h-2 \cdot m^3,
\end{align*}
where $h\equiv\left( -11 i^3 l^3 + 18 i^2 l^3 + 78 i^2 l^2 - 72 i l^2 + 48 i l + 128 \right)$, and $m\equiv\sqrt{16 + 4il + 2il^2(2i-3)}$, both positive for $0<i,l<1$. 

We proceed by showing that the minimum of $h-2m^3$ is positive. Let $u\equiv4il-6il^2+4i^2l^2$, so that $m=\sqrt{16+u}$. It is easily known that $2m^3=2m(16+u)\ \leq 128+12u+\frac{u^2}{4} $ since $8m\leq 16+m^2$. Thus, rearranging and simplifying $h-2m^3$ gives
\begin{align*}
h-2m^3&\geq h-128-12u-\frac{u^2}{4}\\
&=26i^2l^2+30i^2l^3-19i^3l^3-9i^2l^4+12i^3l^4-4i^4l^4\\
&=i^2l^2(26-19il)+i^2l^3(30-9l)+i^3l^4(12-4i)>0.
\end{align*}
This inequality holds obviously for $0<i,l<1$. It shows $h-2m^3>0$ and therefore proves $U_{J}(q=1)>U_{C}$.

\subsection{Proof of Proposition \ref{Prop:trade-offStakeinq}}
The proof proceeds in two steps. First, we show some important properties of $\Delta_1$, $\Delta_2$, and $\Delta_3$ in the main text. Second, we show that $q^\dagger$ exists.

\subsubsection*{Step 1: Important properties of $\Delta_1$, $\Delta_2$, and $\Delta_3$}
$\Delta_1$ in \eqref{Eqn:commit} is defined as $U_{J}(\beta_J(q=0),\bar{a}(\beta_J(q=0),q=0))-U_{C}(\beta_C,a_C)$, which is negative because of Part 1 of Lemma \ref{Lemma: ExtremeSignals}.

Recall from the proof of Lemma \ref{Lemma: ExtremeSignals} that the stakeholder's utility is
\begin{align*}
    U_J(\beta_J(q),\bar{a}(\beta_J(q),q))=\Pr(\theta=G)(X-I)-\Pr(\theta=B)(1-\beta_J(q))(1-\frac{\bar{a}(\beta_J(q),q)}{2})I.
\end{align*}
Therefore, $\Delta_2$ defined in \eqref{Eqn:resource} can be written as 
\begin{align*}
\Delta_2\coloneqq & \quad U_{J}(\beta_J(q=0),\bar{a}(\beta_J(q=0),q))-U_{J}(\beta_J(q=0),\bar{a}(\beta_J(q=0),q=0))\\
=& \quad \frac{I}{2}\Pr(\theta=B)(1-\beta_J(q=0))\left(\bar{a}(\beta_J(q=0),q)-\bar{a}(\beta_J(q=0),q=0)\right).
\end{align*}
Since $\bar{a}(\beta_J(q=0),q)>\bar{a}(\beta_J(q=0),q=0)$ (Part 1 of Proposition \ref{Prop: beta}), we have $\Delta_2>0$.

Similarly, $\Delta_3$ defined in \eqref{Eqn:deter} is calculated as 
\begin{align*}
    \Delta_3\coloneqq & \quad U_{J}(\beta_J(q),\bar{a}(\beta_J(q),q))-U_{J}(\beta_J(q=0),\bar{a}(\beta_J(q=0),q))\\
   =&\quad \underbrace{    U_{J}(\beta_J(q),\bar{a}(\beta_J(q=0),q))-U_{J}(\beta_J(q=0),\bar{a}(\beta_J(q=0),q))}_{\text{Direct effect}}\\
   & + \underbrace{U_{J}(\beta_J(q),\bar{a}(\beta_J(q),q))-U_{J}(\beta_J(q),\bar{a}(\beta_J(q=0),q))}_{\text{Indirect effect}}.
\end{align*}
The second equality decomposes $\Delta_3$ into two terms: the direct effect of changing audit effort from $\beta_J(q=0)$ to $\beta_J(q)$, and the indirect effect via changing critical verification level from $\bar{a}(\beta_J(q=0),q)$ to $\bar{a}(\beta_J(q),q)$. The direct effect is positive due to Part 2 of Proposition \ref{Prop: beta} that $\beta_J(q)>\beta_J(q=0)$. The indirect effect is negative since we have shown in the proof for Part 2 of Proposition \ref{Prop: beta} that $\bar{a}(\beta_J(q),q)<\bar{a}(\beta_J(q=0),q)$. 

We then show the dominance of the direct effect over the indirect effect. Recall from Step 1 of the proof for Lemma \ref{Lemma: ExtremeSignals} that the stakeholder's ex-ante expected utility can be computed as
\begin{align*}
    &U_J=\Pr(\theta=G)(X-I)-I\Pr(\theta=B)\sum_t\Pr(r=g,s=\omega_t|\theta=B)(1-a(\omega_t))\\
    &\quad -\frac{\sum_t\Pr(r=g,s=\omega_t)ca(\omega_t)^2}{2}.
\end{align*}
Note that $\Pr(r=g,s=\omega_t|\theta=B)=(1-\beta_J)\Pr(s=\omega_t|B)$, and $\Pr(r=g,s=\omega_t)=\Pr(\theta=G)\Pr(s=\omega_t|G)+\Pr(\theta=B)(1-\beta_J)\Pr(s=\omega_t|B)$. Thus, $\frac{\partial U_J}{\partial \beta_J}>0$. Since the equilibrium verification levels satisfy the first-order condition, we must have $\frac{\partial U_J}{\partial a(\omega_t)}=0$ for any $t$. By the envelope theorem, we have $\frac{d U_J(\beta_J(q),\bar{a}(\beta_J(q),q))}{d\beta_J}>0$. This proves $\Delta_3>0$.

It is straightforward that
\begin{align*}
\frac{d \Delta_2+\Delta_3}{dq} &=\frac{d U_J(\beta_J(q),\bar{a}(\beta_J(q),q))-U_J(\beta_J(q=0),\bar{a}(\beta_J(q=0),q=0))}{dq}\\
&=\frac{d U_J(\beta_J(q),\bar{a}(\beta_J(q),q))}{dq}>0.
\end{align*}
The inequality follows from Part 2 of Lemma \ref{Lemma: ExtremeSignals}.

\subsubsection*{Step 2: Existence of $q^\dagger$}
We have shown in Step 1 and Lemma \ref{Lemma: ExtremeSignals} that $U_J(q=0)-U_C<0$, $U_J(q=1)-U_C>0$, $\frac{d\left(U_J(q)-U_C\right)}{dq}>0$ and the continuity of $U_J(q)$ and $U_C$. By the Intermediate Value Theorem, there must be a unique threshold $q^\dagger\in(0,1)$ such that $U_J>U_C$ if and only if $q>q^\dagger$. Thus, this proves the Proposition.

\subsection{Proof of Lemma \ref{Lemma: compareofa}}
We will focus on $\bar{a}(\beta_J(q=1),q=1)>a_C$ because $\bar{a}(\beta_J(q=0),q=0)<a_C$ has been proved in the proof for Step 2 of Lemma \ref{Lemma: ExtremeSignals}.

Comparing $a_C$ in \eqref{Exp: aC} and $\bar{a}(\beta_J(q=1),q=1)$ in \eqref{baraq1} yields
\begin{align*}
    a_C-\bar{a}(\beta_J(q=1),q=1)&=\frac{4 + 2i l - \sqrt{16 + 4il + 2il^2( 2i-3)}}{3l}-i\\
    &=\frac{(4-il)^2 - (16 + 4il + 2il^2( 2i-3))}{3l(4-il+\sqrt{16 + 4il + 2il^2( 2i-3)})}\\
    &=\frac{-6il(2-l)-3i^2l^2}{3l(4-il+\sqrt{16 + 4il + 2il^2( 2i-3)})}<0.
\end{align*}
These three lines use basic rules of algebra. The last inequality holds since the numerator is negative and the denominator is positive. Thus, this proves $a_C<\bar{a}(\beta_J(q=1),q=1)$.

In addition, it is straightforward that
\begin{equation*}
    \beta_J(q=1)=\frac{l}{2}\bar{a}(q=1)>\frac{l}{2}a_C=\beta_C.  
\end{equation*}

In sum, we have $\bar{a}(q=0)<a_C<\bar{a}(q=1)$ and $\beta_J(q=0)<\beta_C<\beta_J(q=1)$.

\subsection{Proof of Proposition \ref{Prop: mainAuditor}}
Recall that the audit risk equations in \eqref{riskJ} and \eqref{riskC} can be expressed in the same form
\begin{equation}
    AuditRisk_R=\Pr(\theta=B)(1-\beta_R)(1-a_R),\label{riskR}
\end{equation}
where $R\in\{J,C\}$, $a_R$ is the critical verification (i.e., $\bar{a}(q)$ in Regime J and the committed verification $a_C$ in Regime C), and $\beta_R\in\{\beta_J,\beta_C\}$ is the audit effort.

Since $\beta_R$ can be regarded as a function of $a_R$ (i.e., $\beta_R=\frac{L\Pr(\theta=B)}{k}a_R$), differentiating \eqref{riskR} with respect to $a_R$ yields
\begin{align*}
\frac{d AuditRisk_R}{d  a_R} &= \frac{\partial AuditRisk_R}{\partial a_R} + \frac{\partial AuditRisk_R}{\partial \beta_R} \cdot \frac{d \beta_R}{d a_R}\\
&=-\Pr(\theta=B)(1-\beta_R)-\Pr(\theta=B)(1-a_R)\cdot\frac{d \beta_R}{d a_R}<0.
\end{align*}
The inequality holds since both the direct and indirect effects of $a_R$ on $AuditRisk_R$ are negative. Thus, we have $\frac{d AuditRisk_R}{d a_R}<0$, implying that comparing audit risks is equivalent to comparing critical verification levels.

Then, based on Lemma \ref{Lemma: compareofa} that $\bar{a}(q=0)<a_C<\bar{a}(q=1)$, Part 3 of Proposition \ref{Prop: beta} that $\frac{d\bar{a}(q)}{dq}>0$, and by the Intermediate Value Theorem, there must exist a threshold $q^\ddagger\in(0,1)$ such that $AuditRisk_J>AuditRisk_C$ if and only if $q<q^\ddagger$.

\subsection{Proof of Proposition \ref{Prop: ComparativeStaticTradeoffStake}} 
Our proof proceeds in four steps. First, we establish the equivalence between Blackwell dominance (informativeness) and the probability of being informed ($q$) under the linear information structure. Second, we solve the equilibrium in Regime J with this linear information structure. Third, we derive tractable $q^\ddagger$ by comparing $\bar{a}_J$ with $a_C$, and $q^\dagger$ from comparing $U_{J}$ with $U_{C}$. Finally, we explore the effects of the verification and audit incentives on $q^\ddagger$ and $q^\dagger$ to complete the comparative statics.

\subsubsection{Step 1:}
Consider two linear information structures $\pi_{q_1}:\Theta \to \Delta(S_1)$ and $\pi_{q_2}:\Theta \to \Delta(S_2)$, where $\Theta = \{G, B\}$ and $S_1=S_2 = \{G, B, \emptyset\}$, and a conditional probability function $\gamma:S_1 \to \Delta(S_2)$. Specifically, $\gamma$ is defined as
\[
\gamma = 
\begin{pmatrix}
\frac{q_2}{q_1} & 0 & 1-\frac{q_2}{q_1} \\
0 & \frac{q_2}{q_1} & 1-\frac{q_2}{q_1} \\
0 & 0 & 1
\end{pmatrix}.
\]

For $j \in \{q_1, q_2\}$, the signal $\pi_j$ is defined by its conditional probability distributions
\begin{align*}
    \pi_j(G) &= (q_j,\ 0,\ 1-q_j), \\
    \pi_j(B) &= (0,\ q_j,\ 1-q_j),
\end{align*}
where $q_1, q_2 \in [0, 1]$ and $q_1> q_2$ without loss of generality. Rewriting these two information structures in matrix form leads to
\[
\pi_{q_1} = \begin{pmatrix}
q_1 & 0 & 1-q_1 \\
0 & q_1 & 1-q_1
\end{pmatrix},
\quad
\pi_{q_2} = \begin{pmatrix}
q_2 & 0 & 1-q_2 \\
0 & q_2 & 1-q_2
\end{pmatrix}.
\]
It is straightforward to verify $\pi_{q_2}=\pi_{q_1}\cdot\gamma$, which indicates that $\pi_{q_2}$ is a garbling of $\pi_{q_1}$ iff $q_1>q_2$. Therefore, by the Blackwell Theorem, $\pi_{q_1}$ Blackwell dominates $\pi_{q_2}$ iff $q_1>q_2$. 

\subsubsection{Step 2:}
Applying the linear information structure in Regime J by simplifying Equations \eqref{aw}, \eqref{bara}, and \eqref{betaJ} yields
\begin{align*}
a_J(\emptyset)&=i\Pr(\theta=B\mid r=g,s=\emptyset)=i\frac{1-\hat{\beta}_J}{2-\hat{\beta}_J},\\
a_J(G)&=0,a_J(B)=i,\bar{a}_J=qi+(1-q)i\frac{1-\hat{\beta}_J}{2-\hat{\beta}_J},\beta_J=\frac{l}{2}\bar{a}_J.
\end{align*}
Imposing the rational expectations requirement $\hat{\beta}_J=\beta_J$ gives
\begin{align}
\label{equilibriumJ_Linear}
2\beta_J^2 - (4 + i l)\beta_J + i l (q+1) = 0.
\end{align}
Solving Equation \eqref{equilibriumJ_Linear} leads to the following equilibrium $(\beta_J^q,\bar{a}_J^q)$:
\begin{align}
\beta_J^q&=\frac{4+il-\sqrt{16+i^2l^2-8ilq}}{4},\label{betaJinq}\\
\bar a_J^q&=\frac{4+il-\sqrt{16+i^2l^2-8ilq}}{2l}.\label{barainq}
\end{align}

\subsubsection{Step 3:}
Combining $a_C$ in \eqref{Exp: aC} and $\bar{a}_J^q$ in \eqref{barainq} yields a unique solution $q^\ddagger$, which is 
\begin{equation}
q^\ddagger=\frac{16-2il+6il^2-2i^2l^2-(4-il)m}{18il},\label{qddagger}
\end{equation}
where $m\equiv\sqrt{16 + 4il + 2il^2(2i-3)}$.

Use $U_J^q(\beta_J^q,\bar{a}_J^q)$ to denote the stakeholder's utility in Regime J with the linear information structure. Since $U_J^q(q=0)<U_{C}<U_{J}^q(q=1)$ and $\frac{dU_{J}^q(q)}{d q}>0$, the unique $q^\dagger$ can be directly derived by Step 1.

Simplifying $U_J^q=U_{C}$, we have
\begin{align}
108 i^2 l^2 q+ (4 - i l) (64 - i l (14 - (18 - 5 i) l))=8 (8 + i l (2 - (3 - 2 i) l)) m-27 i l (2 - (2 - i) l) n_q,\label{Exp: qdagger}
\end{align}
where $n_q\equiv\sqrt{16 + i l (i l - 8 q)}$. The solution of equation \eqref{Exp: qdagger} is exactly the $q^\dagger$, which is 
\begin{align}
q^\dagger=&\frac{1}{54 i^2 l^2} \Bigl(
-128 - 16 i^3 l^3 + 3 i^2 l^2 (-17 + 21 l) + 6 i l (1 + 12 l - 9 l^2) + 2 m^3 \notag\\
&- 3 \sqrt{3} \bigl(2 + (-2 + i) l\bigr) 
\sqrt{ i l \bigl( 128 + 12 i^2 (2 - 3 l) l^2 + 16 i^3 l^3 + 3 i l (25 - 6 l + 9 l^2) - 2 m^3 \bigr) } \Bigr).\label{qdaggerlinear}
\end{align}

\subsubsection{Step 4:}
\begin{align}
\frac{\partial q^\ddagger}{\partial i}
&=\frac{\left(64+(2-3l)(4il+i^2l^2)+4i^3l^3\right)-(16+2i^2l^2)m}{18i^2lm},\label{DqddaggerDi}\\
\frac{\partial q^\ddagger}{\partial l}
&=\frac{(4+2il-m)(32+4i^2l^2-12il^2)-12il(1-l)(4+il)}{36il^2m}.\label{DqddaggerDl}
\end{align}

For \eqref{DqddaggerDi}, the denominator $18i^2lm>0$. As both terms in the numerator are obviously positive, simplifying $\frac{\partial q^\ddagger}{\partial i}$ gives
\begin{align*}
\frac{\partial q^\ddagger}{\partial i}&\propto \left(64+(2-3l)(4il+i^2l^2)+4i^3l^3\right)^2-(16+2i^2l^2)^2m^2\\
&=i^2l^2(-1728-576l-252i^2l^2-108i^2l^3+288il+288il^2+144l^2+72il^3+9i^2l^4)\\
&<i^2l^2(-1728-576l-252i^2l^2-108i^2l^3+801)<0.
\end{align*}
The first two lines use basic algebraic rules. The first inequality in the last line holds since $0<i,l<1$, and the second inequality holds obviously. This proves $\frac{\partial q^\ddagger}{\partial i}<0$.

As for \eqref{DqddaggerDl}, the denominator $36il^2m>0$. Some algebraic manipulations yield
\begin{align*}
\frac{\partial q^\ddagger}{\partial l}&\propto (4+2il-m)(32+4i^2l^2-12il^2)-12il(1-l)(4+il)\\
&=3la_C(32+4i^2l^2-12il^2)-12il(1-l)(4+il)\\
&>\frac{6il(l+2)(32+4i^2l^2-12il^2)}{8+4il}-12il(1-l)(4+il)\\
&=\frac{18 i l^2 \bigl(8 - 4i - i l^2 + 2i l + i^2 l^2\bigr)}{2+il}>0.
\end{align*}
The second line uses \eqref{Exp: aC} that $4+2il-m=3la_C$. The inequality in the third line follows immediately from \eqref{equilibriumC} that $a_C=2i(l+2)/\bigl((8+4il)-3la_C\bigr)>2i(l+2)/(8+4il)$. The inequality in the last line holds since $8 - 4i - i l^2 + 2i l + i^2 l^2>3$. Thus, this proves $\frac{\partial q^\ddagger}{\partial l}>0$.

Similarly, the effect of verification and audit incentives on $q^\dagger$ can be calculated directly, as $q^\dagger$ is analytically tractable. Proofs of $\frac{\partial q^\dagger}{\partial i}<0$ and $\frac{\partial q^\dagger}{\partial l}>0$ are lengthy and follow procedures analogous to the above comparative statics for $q^\ddagger$, thus omitted.

\bibliographystyle{apalike}
\bibliography{Ref}

@article{lennox2010auditing,
  title={Auditing the auditors: Evidence on the recent reforms to the external monitoring of audit firms},
  author={Lennox, Clive and Pittman, Jeffrey},
  journal={Journal of Accounting and Economics},
  volume={49},
  number={1-2},
  pages={84--103},
  year={2010},
  publisher={Elsevier}
}

@article{lohlein2016peer,
  title={From peer review to {PCAOB} inspections: Regulating for audit quality in the US},
  author={L{\"o}hlein, Lukas},
  journal={Journal of Accounting Literature},
  volume={36},
  pages={28--47},
  year={2016},
  publisher={Elsevier}
}

@article{ehlen1996procedural,
  title={Procedural Fairness in the Peer and Quality Review Programs},
  author={Ehlen, Craig R and Welker, Robert B},
  journal={Auditing: A Journal of Practice \& Theory},
  volume={15},
  number={1},
  year={1996}
}

@article{wallace1994exploratory,
  title={An exploratory content analysis of terminology in public accounting firms’ responses to AICPA peer reviews},
  author={Wallace, WA and Cravens, KS},
  journal={Research in Accounting Regulation},
  volume={8},
  pages={3--32},
  year={1994}
}

@article{houston2013audit,
  title={Audit partner perceptions of post-audit review mechanisms: An examination of internal quality reviews and {PCAOB} inspections},
  author={Houston, Richard W and Stefaniak, Chad M},
  journal={Accounting Horizons},
  volume={27},
  number={1},
  pages={23--49},
  year={2013},
  publisher={American Accounting Association}
}

@book{von1934marktform,
  title={Marktform und Gleichgewicht},
  author={Stackelberg, Heinrich von},
  lccn={36009325},
  url={https://books.google.com.hk/books?id=wihBAAAAIAAJ},
  year={1934},
  publisher={J. Springer},
  address= {Wien und Berlin}
}

@article{ye2023theory,
  title={The Theory of Auditing Economics: Evidence and Suggestions for Future Research},
  author={Ye, Minlei},
  journal={Foundations and Trends (R) in Accounting},
  volume={18},
  number={3},
  pages={138--267},
  year={2023},
  publisher={now publishers}
}

@article{gao2019auditing,
  title={Auditing standards, professional judgment, and audit quality},
  author={Gao, Pingyang and Zhang, Gaoqing},
  journal={The Accounting Review},
  volume={94},
  number={6},
  pages={201--225},
  year={2019},
  publisher={American Accounting Association}
}

@article{deng2012auditors,
  title={Auditors’ liability, investments, and capital markets: A potential unintended consequence of the Sarbanes-Oxley Act},
  author={Deng, Mingcherng and Melumad, Nahum and Shibano, Toshi},
  journal={Journal of Accounting Research},
  volume={50},
  number={5},
  pages={1179--1215},
  year={2012},
  publisher={Wiley Online Library}
}

@article{patterson2007effects,
  title={The effects of Sarbanes-Oxley on auditing and internal control strength},
  author={Patterson, Evelyn R and Smith, J Reed},
  journal={The Accounting Review},
  volume={82},
  number={2},
  pages={427--455},
  year={2007}
}

@article{dye1993auditing,
  title={Auditing standards, legal liability, and auditor wealth},
  author={Dye, Ronald A},
  journal={Journal of Political Economy},
  volume={101},
  number={5},
  pages={887--914},
  year={1993},
  publisher={The University of Chicago Press}
}

@article{ye2013economics,
  title={The economics of setting auditing standards},
  author={Ye, Minlei and Simunic, Dan A},
  journal={Contemporary Accounting Research},
  volume={30},
  number={3},
  pages={1191--1215},
  year={2013},
  publisher={Wiley Online Library}
}

@article{defond2010should,
  title={How should the auditors be audited? Comparing the {PCAOB} inspections with the AICPA peer reviews},
  author={DeFond, Mark L},
  journal={Journal of Accounting and Economics},
  volume={49},
  number={1-2},
  pages={104--108},
  year={2010},
  publisher={Elsevier}
}

@article{defond2011effect,
  title={The effect of SOX on small auditor exits and audit quality},
  author={DeFond, Mark L and Lennox, Clive S},
  journal={Journal of Accounting and Economics},
  volume={52},
  number={1},
  pages={21--40},
  year={2011},
  publisher={Elsevier}
}

@article{chen2019effects,
  title={The effects of audit quality disclosure on audit effort and investment efficiency},
  author={Chen, Qi and Jiang, Xu and Zhang, Yun},
  journal={The Accounting Review},
  volume={94},
  number={4},
  pages={189--214},
  year={2019},
  publisher={American Accounting Association}
}

@article{gipper2020public,
  title={Public oversight and reporting credibility: Evidence from the {PCAOB} audit inspection regime},
  author={Gipper, Brandon and Leuz, Christian and Maffett, Mark},
  journal={The Review of Financial Studies},
  volume={33},
  number={10},
  pages={4532--4579},
  year={2020},
  publisher={Oxford University Press}
}

@article{lamoreaux2016does,
  title={Does {PCAOB} inspection access improve audit quality? An examination of foreign firms listed in the United States},
  author={Lamoreaux, Phillip T},
  journal={Journal of Accounting and Economics},
  volume={61},
  number={2-3},
  pages={313--337},
  year={2016},
  publisher={Elsevier}
}

@article{fung2017does,
  title={Does the {PCAOB} international inspection program improve audit quality for non-US-listed foreign clients?},
  author={Fung, Simon Yu Kit and Raman, KK and Zhu, Xindong Kevin},
  journal={Journal of Accounting and Economics},
  volume={64},
  number={1},
  pages={15--36},
  year={2017},
  publisher={Elsevier}
}

@article{gunny2013pcaob,
  title={{PCAOB} inspection reports and audit quality},
  author={Gunny, Katherine A and Zhang, Tracey Chunqi},
  journal={Journal of Accounting and Public Policy},
  volume={32},
  number={2},
  pages={136--160},
  year={2013},
  publisher={Elsevier}
}

@article{khurana2021pcaob,
  title={{PCAOB} inspections and the differential audit quality effect for Big 4 and non-Big 4 US auditors},
  author={Khurana, Inder K and Lundstrom, Nathan G and Raman, KK},
  journal={Contemporary Accounting Research},
  volume={38},
  number={1},
  pages={376--411},
  year={2021},
  publisher={Wiley Online Library}
}

@article{aobdia2017regulatory,
  title={Regulatory oversight and auditor market share},
  author={Aobdia, Daniel and Shroff, Nemit},
  journal={Journal of Accounting and Economics},
  volume={63},
  number={2-3},
  pages={262--287},
  year={2017},
  publisher={Elsevier}
}

@article{aobdia2018impact,
  title={The impact of the {PCAOB} individual engagement inspection process—Preliminary evidence},
  author={Aobdia, Daniel},
  journal={The Accounting Review},
  volume={93},
  number={4},
  pages={53--80},
  year={2018},
  publisher={American Accounting Association}
}

@article{daugherty2010pcaob,
  title={{PCAOB} inspections of smaller {CPA} firms: The perspective of inspected firms},
  author={Daugherty, Brian and Tervo, Wayne},
  journal={Accounting Horizons},
  volume={24},
  number={2},
  pages={189--219},
  year={2010}
}

@article{cunningham2019s,
  title={What's in a name? Initial evidence of US audit partner identification using difference-in-differences analyses},
  author={Cunningham, Lauren M and Li, Chan and Stein, Sarah E and Wright, Nicole S},
  journal={The Accounting Review},
  volume={94},
  number={5},
  pages={139--163},
  year={2019},
  publisher={American Accounting Association}
}

@article{burke2019audit,
  title={Audit partner identification and characteristics: Evidence from US Form AP filings},
  author={Burke, Jenna J and Hoitash, Rani and Hoitash, Udi},
  journal={Auditing: A Journal of Practice \& Theory},
  volume={38},
  number={3},
  pages={71--94},
  year={2019},
  publisher={American Accounting Association}
}

@article{christensen2024costs,
  author  = {Christensen, Brant E. and Newton, Nathan J. and Wilkins, Michael S.},
  title   = {Costs and Benefits of a Risk-Based {PCAOB} Inspection Regime},
  journal = {Accounting, Organizations and Society},
  year    = {2024},
  volume  = {112},
  pages   = {101552},
  doi     = {10.1016/j.aos.2024.101552}
}

@article{johnson2019auditors,
  author  = {Johnson, Lindsay M. and Keune, Marsha B. and Winchel, Jennifer},
  title   = {{U.S.} Auditors' Perceptions of the {PCAOB} Inspection Process: A Behavioral Examination},
  journal = {Contemporary Accounting Research},
  year    = {2019},
  volume  = {36},
  number  = {3},
  pages   = {1540--1574},
  doi     = {10.1111/1911-3846.12467}
}

@article{westermann2019pcaob,
  author  = {Westermann, Kimberly D. and Cohen, Jeffrey and Trompeter, Greg},
  title   = {{PCAOB} Inspections: Public Accounting Firms on ``Trial''},
  journal = {Contemporary Accounting Research},
  year    = {2019},
  volume  = {36},
  number  = {2},
  pages   = {694--731},
  doi     = {10.1111/1911-3846.12454}
}

@article{chen2025pcaob,
  author  = {Chen, Po-Chang and Moul, Charles and Reffett, Andrew},
  title   = {Do {PCAOB} Inspections Change the Effect of Litigation Risk on Audit Quality?},
  journal = {The International Journal of Accounting},
  year    = {2025},
  volume  = {60},
  number  = {1},
  pages   = {1--40},
  articleno = {2450019},
  doi     = {10.1142/S1094406024500197}
}
\end{document}